\newtheorem{theorem}{Theorem}[section]
\newtheorem{lemma}{Lemma}[section]
\newtheorem{claim}{Claim}[section]
\newtheorem{example}{Example}[section]
\newtheorem{proposition}{Proposition}[section]
\newcommand{\gpp}{{\sc gpp}}
\newcommand{\gap}{{\sc gap}}
\newcommand{\matroid}{{\sc matroid}}
\newcommand{\matching}{{\sc matching}}
\newcommand{\knapsack}{{\sc knapsack}}
\newcommand{\unit}{{\sc unit}}
\newcommand{\mul}{{\sc mul}}
\newcommand{\pr}{{\mathbf{Pr}}}
\newcommand{\ex}{{\mathbf{E}}}
\newcommand{\calA}{{\cal A}}
\newcommand{\calD}{{\cal D}}
\newcommand{\calL}{{\cal L}}
\title{Mechanism Design without Money via Stable Matching}
\author{Ning Chen\thanks{Division of Mathematical Sciences, School of Physical and Mathematical Sciences, Nanyang Technological University, Singapore. Email: {\tt ningc@ntu.edu.sg, ngravin@pmail.ntu.edu.sg}. } \\
\and
Nick Gravin$^*$\\
\and Pinyan Lu\thanks{Microsoft Research Asia. Email: {\tt pinyanl@microsoft.com}.}}
\date{}
\begin{document}

\maketitle \thispagestyle{empty}

\begin{abstract}
Mechanism design without money has a rich history in social choice literature.
Due to the strong impossibility theorem by Gibbard and Satterthwaite,
exploring domains in which there exist dominant strategy mechanisms is one of the central questions in the field.
We propose a general framework, called the generalized packing problem (\gpp), to study the mechanism design questions without payment.
The \gpp\ possesses a rich structure and comprises a number of well-studied models as special cases, including, e.g., matroid, matching, knapsack, independent set, and the generalized assignment problem.

We adopt the agenda of approximate mechanism design where the objective is to design a truthful (or strategyproof) mechanism without money that can be implemented in polynomial time and yields a good approximation to the socially optimal solution.
We study several special cases of \gpp, and give constant approximation mechanisms for matroid, matching, knapsack, and the generalized assignment problem.
Our result for generalized assignment problem solves an open problem proposed in~\cite{DG10}.

Our main technical contribution is in exploitation of the approaches from stable matching,
which is a fundamental solution concept in the context of matching marketplaces, in application to mechanism design.
Stable matching, while conceptually simple, provides a set of powerful tools to manage and analyze
self-interested behaviors of participating agents.
Our mechanism uses a stable matching algorithm as a critical component and adopts other approaches like random sampling and online mechanisms.
Our work also enriches the stable matching theory with a new knapsack constrained matching model.
\end{abstract}

\newpage

\setlength{\baselineskip}{.45cm}

\section{Introduction}

Algorithmic mechanism design is a fascinating field initiated by the seminal work of Nisan and Ronen~\cite{NR99}, which takes incentive preferences of self-interested agents into account for a multitude of algorithmic challenges. One of the most beautiful designs is that of the VCG mechanism~\cite{vickrey,clarke,groves}, which computes an overall outcome, as well as an individual payment to each participating agent. The VCG mechanism has two remarkable properties: social welfare optimization and incentive compatibility (a.k.a., truthfulness, i.e., it is a dominant strategy for every agent to report his private information to the mechanism).

The story is by no means over, however, due to the very limited applicability of the VCG mechanisms.
First, for a number of realistic problems, the computation of a socially optimal outcome is intractable. Computational efficiency has been one of the key challenges in algorithmic mechanism design. A great amount of work has focused on designing truthful mechanisms that can be implemented efficiently, at a minimum loss of social efficiency, including, e.g., combinatorial auctions~\cite{NR99,DNS06,DD09,DRY11} and machine scheduling~\cite{AT01,AAS05,DDD08}. In other words, these works study the question
``What is the power or limit of computation to approximate a socially optimal solution at a cost of being truthful?".

Second, to enforce a strategic incentive environment, the VCG mechanisms may result in a huge amount of overpayment to the participants~\cite{AT01}. In practice, however, we cannot expect a mechanism to induce  large expenses since the market designer may have his own objectives as well (e.g., revenue maximization). A number of works have therefore focused on designing (computationally efficient) truthful mechanisms that are frugal (i.e., with a small payment)~\cite{AT01,talwar,KKT05,CEG10,KSM10}, or with a sharp budget constraint~\cite{PS10,CGL11}, or even further, that do not have any payment~\cite{DFP08,PT09,LSW10,AFK10,DG10,AFP11}.

There are a number of important domains, e.g., political elections, organ donations, and school admissions, where monetary transfers are strictly prohibited. That is, money cannot be treated as a medium of compensation~\cite{agt-book}. This traces back to the well-studied social choice theory which maps agents preferences to a set of alternatives.
The mechanism design in these settings (a.k.a., social choice function) requires no payment and therefore is more challenging. Specifically, in contrast with the VCG mechanisms, a socially optimal solution no longer suffices for truthfulness.
Hence, a sacrifice in social welfare is a necessity to derive a truthful bidding environment. This raises the following question:
\begin{quote}
{\em What is the power or limit of compensation to approximate a socially optimal solution at a cost of being truthful?}
\end{quote}
We will address the question in the present paper focusing on ``How (is the design)?" and ``How good (is the approximation)?".

Recently, Procaccia and Tennenholtz~\cite{PT09} originated the study of {\em approximate mechanism design without money}, which considers mechanism design questions to achieve truthfulness without using money while providing good approximations to socially optimal solutions. They (and follow up work~\cite{LSW10,AFP10}) considered a facility location problem
where agents report their locations in a metric space and a mechanism chooses some places in the space to install facilities that serve all agents; every agent would like to minimize the distance from his own location to the closest facility. Motivated by kidney exchange, Ashlagi et al.~\cite{AFK10} considered a matching problem where hospitals report individual vertices and a compatible pairing is then established; the objective of every hospital is to have as many of their own matched individuals as possible. Dughmi and Ghosh~\cite{DG10} considered a job scheduling problem in the framework of generalized assignment problem (e.g.,~\cite{ST93,FGM06}) where every job has a preference over machines and reports which machines it is compatible with, and then based on the reported information a mechanism outputs a feasible assignment from jobs to machines.

Notice that the bidding languages in the above examples are all about the {\em combinatorial structure} of the problems. For instance, in the kidney exchange model~\cite{AFK10}, hospitals report the membership of vertices to form an underlying graph; in the job scheduling problem~\cite{DG10}, jobs report compatible edges to form an underlying assignment bipartite graph. Such restrictions on the bidding languages to some extent are necessary in the sense that we have to escape impossibility results of social choice  such as the well-known Gibbard-Satterthwaite theorem~\cite{agt-book}. In particular, as monetary transfers are not allowed, one cannot expect any interesting results if bidding languages are about valuations, even for the simplest single item setting. The work of~\cite{PT09,LSW10,AFK10,DG10,AFP11}, on the other hand, provides truthful mechanisms with good approximations and demonstrates the power of approximate mechanism design without using money.

\subsection{Our Contributions}

We follow the stream of approximate mechanism design without money and propose a natural framework, called the {\em generalized packing problem} (\gpp):
There is a ground set of items whose values are public (e.g., it is common knowledge how much they can be sold for in a marketplace), and every agent holds a subset of items (e.g., the items that the agent is able to produce). The objective is to pick items from all agents so as to maximize the total valuation of selected items given a public known feasibility constraint; this enforces which set of items can be feasibly chosen.

The \gpp\ possesses a rich structure and models a variety of practical settings like resource allocation, job scheduling, and procurement auctions, etc.
It includes a number of well-studied models as special cases in terms of the feasibility constraint, e.g., matroid, knapsack, graph matching (the ground set corresponds to edges), graph independent set (the ground set corresponds to vertices), interval job scheduling, and the generalized assignment problem (the ground set corresponds to job-machine pairs). These problems have a common feature characterized by the \gpp: there are limited resources to satisfy the demands of all of the competitive individuals.

The flexibility of the feasibility constraint enables us to further set restrictions on the number of items that can be picked from each agent. We will consider the following two standard settings in the paper: unit demand (\unit) where at most one item can be selected from each agent, and multi-unit demand (\mul) where there is no restriction on the number of items that can be picked from an agent (as long as the aggregate solution remains feasible). We note that the former is equivalent to saying that on top of the feasibility constraint of the latter, there is an additional partition matroid constraint among all agents, which restricts each one to be unit demand.

While the market designer cares about overall social welfare and would like to search for an optimum (indeed, some instances, like knapsack, job scheduling, and generalized assignment problem, are NP-hard to solve), every agent, on the other hand, is only concerned with his own benefit, which is measured by the total valuation of his own items selected (e.g., how much his items can be sold in the marketplace). Given a mechanism setup by the market designer, every agent reports a subset of items (i.e., the claim from the agent of which items that he is able to produce), and the mechanism then picks item(s) from the claimed subset of each agent so that the aggregate selected items form a feasible solution.

Counter-intuitively, an agent may report a smaller subset to obtain a better outcome for some badly designed protocols (see more discussions in Section~3).
Our objective is therefore to design truthful mechanisms without money that can be implemented in polynomial time and yield good approximations to the optimal solution
with full information. We show different approximation results for different instances, summarized in the table on the next page.
\begin{table*}[t]
\small
\begin{center}
\begin{tabular}{|c|c|c|}\hline
 & Unit demand (\unit) & Multi-unit demand (\mul) \\ \hline
 Matroid & 2 approx$^*$ & optimal$^*$ \\ \hline
 Matching & 3 approx$^*$ & logarithmic approx$^\S$ \\ \hline
 Knapsack (\knapsack) & constant approx$^\dag$ & constant approx$^\S$ \\ \hline
 Gneralized assignment problem (\gap) & constant approx$^\dag$ (main result) & --- \\ \hline
\end{tabular}
\begin{quote}
\hspace{0.2in} $^*$Deterministic mechanisms.
\\ \hspace{0.2in} $^\dag$Universally truthful randomized mechanisms.
\\ \hspace{0.2in} $^\S$Truthful in expectation randomized mechanisms.
\end{quote}
\end{center}
\end{table*}
\normalsize

The work closest to ours is by Dughmi and Ghosh~\cite{DG10} who considered the same model for \gap\ and its variants,
where every agent owns a job and knows the corresponding compatible job-machine pairs (those pairs correspond to items in the language of \gpp).
The authors gave a logarithmic approximation truthful in expectation mechanism, and
left as an open question whether there exists a constant approximation truthful mechanism. We solve the problem with an affirmative answer.
Indeed, our mechanism is universally truthful (i.e., distributed over deterministic truthful mechanisms), which is stronger than truthful in expectation
(i.e., truthful bidding maximizes expected utility).

We note that the matching model considered in our paper is different from the previous work~\cite{AFK10}, where agents hold vertices rather than edges. Further, our model is a generalization of the matching problem considered in~\cite{DG10}: In their setting every agent holds all the edges incident to a vertex that is on one side of a bipartite graph, whereas in our setting the underlying graph is not necessarily bipartite and every agent can keep an arbitrary set of edges. Similarly, the knapsack model considered in our paper allows every agent to keep multiple items, whereas in~\cite{DG10} every agent only holds one item.

All the models considered in the present paper, i.e., matroid, matching, knapsack, and \gap, have different aspects, and one cannot be considered as a special case of another.
Further, our results imply that unit demand and multi-unit demand settings are two rather different models. For instance, solving the multi-unit demand case (e.g., optimally for \matroid-\mul) does not imply the same solution for unit demand. Indeed, we can show that no deterministic truthful mechanism can beat the approximation ratio 2 for \matroid-\unit.
On the other hand, a simple greedy algorithm yields a truthful mechanism for \matching-\unit\ with an approximation ratio of 3, but it is not truthful for \matching-\mul.
A similar difference can be seen between \knapsack-\unit\ and \knapsack-\mul.
Our work therefore broadens the scope of approximate mechanism design without money to a larger domain with more positive results.

\subsection{Techniques}

As discussed above, designing truthful mechanisms without money has an extra constraint and, thus, adds additional difficulties.
Further, the models in our framework \gpp\ are multi-parameter problems, where each agent holds several items;
in the multi-unit demand model, an agent can even win multiple items which can be interconnected through the combinatorial structure of the problem.
Some powerful characterizations, e.g., the monotonicity condition~\cite{AT01}, no longer suffice for truthfulness.
For matroid and matching, due to their desirable combinatorial properties, a simple greedy algorithm still solves the problems nearly optimally (except \matching-\mul).
For knapsack and \gap, however, the greedy can perform arbitrarily badly and will not suffice; new ideas therefore have to be explored.

Our main technical contribution is the application of approaches from the stable matching theory to design truthful mechanisms.
Stable matching, introduced by Gale and Shapley in their seminal work~\cite{GS},
is a fundamental solution concept in the context of matching marketplaces and
has an enormous influence on the design of real world matching protocols. In the setup, there are a set of men and a set of
women, each with a preference ranking over members of the other side; a matching between the
men and women is {\em stable} if there is no man-woman pair who both strictly prefer each other to their current partners.
From the strategic point of view, while Roth's impossibility result (Theorem~4.4,~\cite{roth-book}) implies
that it does not admit any truthful design for both men and women
to claim their preferences truthfully, the men-optimal stable matching
algorithm (i.e., men make proposals in the deferred acceptance
algorithm of Gale and Shapley) is indeed truthful for all men. The same incentive result
holds if the matching is many-to-one (i.e., one side of the market
can have multiple assignments) where the matches are proposed from
the unit demand side of the market (Theorem~5.16,~\cite{roth-book}).

In order to apply the stable matching theory to mechanism design,
we need to define preferences for the two parties, i.e., jobs and machines in \gap.
In particular, since every machine with a knapsack constraint can take multiple jobs, one has to also specify a preference
of the machine over subsets of jobs, not only over individual jobs.
Given carefully designed preferences of jobs and machines,
we run a deferred acceptance stable matching algorithm (jobs make proposals) to generate a feasible assignment,
which is shown to be truthful for a number of special cases.
For the general case of \gap, unfortunately and surprisingly (compared to the incentive result described above for many-to-one matchings), the stable matching algorithm may not guarantee truthfulness.

To derive a truthful mechanism with a constant approximation for the general \gap, we build our mechanism on the stable matching algorithm
as a critical component and use the idea of random sampling: We choose, in expectation, half of the jobs to form a test set $T$ and run the stable matching algorithm on $T$.
The returned stable assignment $\calA^T$ gives a close estimate to the structure of the optimal solution with a high probability.
We then, using the sampled outcome $\calA^T$ as a guidance, compute a real assignment $\calA$ for the rest of the jobs by considering them one by one in a predescribed order.
Stable matching also plays a crucial role in the analysis of the mechanism. In particular,
for every unassigned job-machine pair in a stable assignment, since at least one of them prefers its current assignment, the contribution of either the job or the machine will compensate the loss of the pair. Based on this idea, we establish an upper bound on the optimal solution using a stable assignment $\calA^*$ returned by the stable matching algorithm running on {\em all} jobs. By using a number of desirable properties from stable matchings and probabilistic analysis, we compare the stable assignments $\calA$, $\calA^T$, and $\calA^*$, and show that they are all close to each other, which yields the desired constant approximation.

Finally, it is worth mentioning that the high level structure of our mechanism, i.e., random sampling plus online algorithm fashion, has a similar flavor to the online mechanisms for secretary problems~\cite{BIK,BDG}. However, in the random samplings of those works, the decision for every sampled object is binary, i.e., either pick it or not; in our setting, deciding an assignment for sampled jobs is still a complicated combinatorial problem.
In addition, in the secretary problem objects are assumed to arrive online, whereas ours is a pure offline problem and we take advantage of considering jobs in a fixed order to ensure truthfulness.
This general framework of designing truthful mechanisms (without money) by random sampling plus online mechanisms may find applications in other problems.


\section{Preliminaries}

We consider the following {\em generalized packing
problem} (\gpp): There is a ground set of items
$A=\{a_1,\ldots,a_m\}$, where each item $a_j\in A$ has a public
known valuation $v_j$. In addition, there are $n$ agents where each
agent $i$ holds a disjoint subset of items $A_i\subseteq A$.
The membership of items in $A_i$ is private information and only known
to agent $i$. Our objective is to select subsets of items
$S_i\subseteq A_i$ from all agents with maximum possible total
valuation $\sum_i\sum_{a_j\in S_i}v_j$ given a public known feasibility
constraint, which enforces which sets of items can be feasibly
chosen. The \gpp\ problem includes a number of natural problems as
special cases with respect to different feasibility constraints. We
will consider the following ones in the paper:
\begin{itemize}
\item Matroid. The set $A$ forms a base set of a matroid and the feasible sets are all independent sets of the matroid.\\[-.25in]

\item Matching. Given an underlying graph, the set $A$ corresponds to the set of edges of the graph, and the selected edges have to be a matching in the graph.\\[-.25in]

\item Knapsack. Each item $a_j\in A$ has another parameter capacity $c_j$, and the total capacity of all selected items should be bounded by the knapsack capacity $C$.\\[-.25in]

\item Generalized assignment problem (\gap). There are a number of machines (or bins) and jobs (or goods), where each machine has a capacity and each job has a value and capacity for each of its compatible machines. The ground set $A$ therefore corresponds to the set of compatible job-machine pairs.
\end{itemize}

As discussed in the Introduction, the problem can be considered in the cases where one can
pick either at most one item (\unit) or multiple items (\mul) from each agent.
We will use {\sc problem-unit} and {\sc problem-mul} to denote the two models for the given problem
respectively (e.g., \matroid-\unit\ and \matroid-\mul\ respectively refer to
\matroid\ with unit demand and multi-unit demand).

Agents, as self-interested individuals, have their own objectives as
well and would like to maximize the total valuation of their own item(s) selected. It is
therefore not always the case that an agent reports his true
information $A_i$ to the protocol to his best interest.
Algorithmic mechanism design takes such incentive issues into account with a focus on
managing self-interested behaviors of the agents.
In our framework, upon receiving a submitted bid $B_i$ from each agent, which can be any subset of $A_i$,\footnote{Since the mechanism
selects items from the agents, it is quite risky if an agent reports an item that he is not able to produce (i.e., not in $A_i$).
In particular, what if one such non-existing item is chosen? One possibility is to introduce a penalty (e.g., utility $-1$) for
the agent who cannot provide the item that he claims. All mechanisms considered in the present paper for the unit demand models continue to work
for this variant.}
the mechanism chooses a subset $S_i\subseteq B_i$ from
each agent as an output so that the collection of all $S_i$'s satisfies the feasibility constraint.
The total {\em valuation} of the mechanism output (i.e., social welfare) is $\sum_i\sum_{a_j\in
S_i}v_j$, and the {\em utility} of each agent $i$ is defined to be
$\sum_{a_j\in S_i}v_j$.

For a given protocol, all agents would like to bid strategically to maximize their own utility.
We say a mechanism is {\em truthful} if it is a dominant strategy for every
agent to report his true information $A_i$. That is, for any
submitted bids of other agents, no individual can get a better
outcome by reporting a set that is different from his true information.
If a mechanism is not deterministic, we say the mechanism is
{\em universally truthful} if it takes a random distribution over
deterministic truthful mechanisms, and is {\em truthful in expectation}
if no risk-neutral agent can obtain more utility in expectation by
misreporting his private information.

Note that the mechanisms considered in the current paper only pick
item(s) from each agent as an output, without compensating with
a payment. This is the reason why the classic VCG
mechanism is not applicable even for polynomial time
solvable problems like matroid and matching. At the cost of being truthful and without payment,
we cannot expect the mechanism to work as well as VCG and have an output that maximizes social welfare.
Our focus therefore is to design truthful mechanisms (without money)
that can be implemented in polynomial time and yield good approximations to the optimal social welfare given that all agents report their true information.

\section{Warmup: Matroid and Matching}

We first consider two ``simple" polynomial time solvable
problems matroid and matching, as a warmup, to illustrate the
framework of \gpp\ and approximate mechanism design without money.

Consider the following simple greedy algorithm: Pick items according
to their decreasing order of values $v_1\ge v_2\ge \cdots\ge v_m$
while preserving feasibility. That is, if $S$ is the current selected subset of items from $\{a_1,\ldots,a_{j-1}\}$,
then the algorithm includes item $a_j$ if $S\cup \{a_j\}$ is feasible.
Not surprisingly, the greedy algorithm
yields an optimal truthful mechanism for the \matroid-\mul\
problem, and a 2 approximation truthful mechanism for the
\matroid-\unit\ problem.

For the \matching-\unit\ problem, the greedy algorithm again picks edges according to the decreasing order of their values; once an edge is selected, all edges
incident to its two endpoints or possessed by the same agent will be
removed as they cannot be selected any more. Since one can pick at
most one edge from each agent, the problem corresponds to the intersection of three partition matroids:
one for left-hand side nodes, one for right-hand side nodes, and one for all agents.
The greedy algorithm, therefore, gives a 3 approximation truthful mechanism.

For the \matching-\mul\ problem where multiple edges can be chosen from an agent,
unfortunately, the greedy algorithm does not ensure truthfulness.
We further note that another natural strategy, the maximum matching algorithm, while yielding an optimal solution, also cannot guarantee truthfulness for \matching-\mul, even for \matching-\unit. See examples in Appendix~\ref{appendix-example} for the arguments.
In Appendix~\ref{appendix-mechanism-matching-mul}, we give an $O(\log(m))$ approximation truthful in expectation mechanism for the \matching-\mul\ problem.

On the other hand, for \matroid-\unit, \matching-\unit\ and \matching-\mul, there is a lower bound of 2 on the approximation ratio of any deterministic mechanism. The detailed arguments of this section can be found in Appendix~\ref{appendix-matching}.

\section{Knapsack}

Next we consider the knapsack problem where each item $j$ has a value $v_j$ and a capacity $c_j$, and there is a knapsack constraint $C$ which restricts the subsets of items that can be feasibly chosen. In our model, every agent holds a subset of items (this is why \knapsack\ cannot be considered as a special case of \gap); depending on the demand, either at most one item (\unit) or multiple items (\mul) can be picked from an agent.

While the greedy algorithm --- pick items according to the ratio $\frac{v_j}{c_j}$ --- can do arbitrarily bad,
it is well-known that a variant by complementing the greedy with a single item of the largest value gives a 2 approximation to knapsack.
Can we convert this greedy algorithm to a truthful mechanism in a similar way as what we did for \matroid\ and \matching?
In \matroid\ and \matching, the greedy algorithm processes the items in the order by their values, which is consistent with the preferences of the agents.
For \knapsack, however, the greedy algorithm considers the items in the order of the ratio $\frac{v_j}{c_j}$, while the preferences of the agents are still according to the value $v_j$.
As a result, the greedy algorithm is not truthful.

To derive a truthful mechanism with a good approximation for \knapsack-\unit, we use the idea of random sampling: We choose in expectation half of the agents and compute a solution by the greedy algorithm; we then use it as a guidance to compute a real output by considering the remaining agents one by one. For \knapsack-\mul, there are additional difficulties when we consider each remaining agent: the optimal solution is NP-hard to compute and an algorithm with a good approximation may not guarantee truthfulness.
To get around the issue, we output a subset whose expected total value is equal to that given by the fractional optimal solution (which can be computed easily by the greedy algorithm); this gives truthful in expectation for the considered agent.
All details are referred to Appendix~\ref{appendix-knapsack}.

\section{Generalized Assignment Problem}
\label{sec:GAP}

\newcommand{\gs}{{\sc Gale-Shapley}}

\newcommand{\lsizem}{{\sc gap-mechanism-1}}
\newcommand{\separationm}{{\sc gap-mechanism-2}}
\newcommand{\Gapm}{{\sc gap-mechanism-3}}
\newcommand{\GapMainm}{{\sc gap-mechanism-main}}

\newcommand{\smgreedy}{{\sc sm-greedy}}
\newcommand{\multigs}{{\sc sm-da-alg}}

In this section, we will consider the generalized assignment problem (\gap) where we are
given a set of jobs and a set of machines. Throughout this section,
we will denote jobs by $i$ and machines by $k$. For each job $i$ and machine $k$,
there is a value $v_{i,k}$ and a capacity $c_{i,k}$. Further, each machine $k$ has a capacity constraint $C_k$.
An output of the \gap\ is an {\em assignment} between the jobs and machines where each job is assigned to at most one machine
and the aggregate capacity of the assigned jobs on each machine is within its capacity constraint.

Given a feasible assignment $\calA$ between the jobs and machines, let $\calA_i$ be the machine that job $i$ is assigned to (denote $\calA_i=\emptyset$ if $i$ is not assigned to any machine), and $\calA_{k}=\{i~|~\calA_i=k\}$ be the set of jobs that are assigned to machine $k$. Further, to simplify the notations, let $v(\calA)=\sum_{(i,k)\in \calA}v_{ik}$ be the total value of the assignment, $v(\calA_i)=v_{i,\calA_i}$ be the value of the assignment to job $i$ (denote $v(\emptyset)=0$), and
$v(S)=\sum_{i\in S}v(\calA_i)$ for any subset of jobs $S$. Finally, let
$v(\calA_k)=\sum_{i\in \calA_k}v_{i,k}$ be the total value of the jobs assigned to machine $k$.
The definitions $c(\cdot)$ are defined similarly with respect to capacity.

We follow the strategic consideration studied in~\cite{DG10} where each job $i$ is held by an agent, and the private information that the agent holds is the compatible pairs $(i,k)$ for the machines. Following our \gpp\ language described above, we have a bipartite graph $G$ where one side corresponds to agents/jobs and the other side corresponds to machines, and the edges represent the compatible job-machine pairs. Every edge $(i,k)\in G$ corresponds to an ``item'' whose value $v_{ik}$ and capacity $c_{ik}$ are public. The private information that every agent/job $i$ has is the membership of its compatible edge $\{(i,k)\in G~|~i\}$. All job-machine pairs discussed in this section are with respect to the underlying compatible graph $G$.

For a given mechanism, every job reports a subset of compatible pairs and would like to get assigned to a machine $k$ with maximum possible value $v_{ik}$.
Our objective again is to design a truthful mechanism to approximate the optimal \gap\ solution.
Dughmi and Ghosh~\cite{DG10} gave a logarithmic approximation truthful in expectation mechanism for the problem. In this section, we will show a constant approximation universally truthful mechanism.
Our mechanism is based on stable matching and random sampling. We will first discuss a stable matching algorithm, and then in Section~\ref{section-gap-mechanism} describe the main mechanism.

\subsection{Stable Matching Algorithm}\label{section-stable-matching}

Stable matching is a fundamental solution concept in the context of
two-sided matching marketplaces, introduced by Gale and Shapley in
their seminal work~\cite{GS}. A matching between the two parties of
a bipartite marketplace is called stable if there is no pair who
both strictly prefer each other to their current partners. The
deferred acceptance algorithm of Gale and Shapley~\cite{GS}
computes a stable matching.

In order to apply the stable matching theory to our problem \gap, one needs to define
preferences for the two parties, jobs and machines. For every job
$i$, its preference over machines is pretty straightforward, which
is simply according to the preference of the agent, i.e., ranking
machines by the value $v_{ik}$. For every machine $k$, we may
define its preference over jobs in a similar way according to the
values. For such preferences, the
deferred acceptance algorithm can be implemented in the same way as
the greedy algorithm (i.e., the order of proposals is according to the value $v_{ik}$
and a machine accepts a proposal as long as it has enough remaining capacity).
We denote such implementation by \underline{\smgreedy}.
By the fact that every job goes to at most one machine, it can be seen that \smgreedy\ is truthful for all jobs for any fixed tie breaking rule.

The solution generated by \smgreedy, however, may have an arbitrarily bad total valuation, since
the preferences completely ignore the capacities of the jobs.
To balance the value $v_{ik}$ and capacity $c_{ik}$, we will
consider the following preferences in this section:
\begin{itemize}
\item Every job $i$ has a strict preference list, denoted by $\calL_i$, which is according to the decreasing order of $v_{ik}$.
\item For every machine $k$, its strict preference over individual jobs is according to the decreasing order of the ratio $\frac{v_{ik}}{c_{ik}}$. Since a machine may take multiple jobs, we have to specify the preference of every machine $k$ over subsets of jobs, which is defined as follows: For any two subsets of jobs $S$ and $S'$, where $\sum_{i\in S}c_{ik}\le C_k$ and $\sum_{i\in S'}c_{ik}\le C_k$, first remove all common jobs and then all pairs of jobs in $S$ and $S'$ that have the same ratios $\frac{v_{ik}}{c_{ik}}$, then let $k$ prefer the one which has its most preferred individual job among all the remainings. Finally, $k$ does not prefer any subset whose aggregate capacity is beyond its capacity $C_k$. We denote such strict preference list of every machine by $\calL_k$.\footnote{Alternatively, one can define the preference over feasible subsets of jobs by simply comparing the aggregate values of the jobs in the subsets. Such a preference rule, while capturing the real preference for social welfare, is not implementable as for any given unfeasible set $S$, finding a feasible subset $S'\subset S$ that the machine prefers most is equivalent to solving the knapsack problem, which is known to be NP-hard. Our preference $\calL_k$, however, still balances the values and capacities in terms of their ratios, and is easy to implement in the following stable matching algorithm and the main mechanism.}
\end{itemize}

Note that the preferences $\calL_i$ and $\calL_k$ are {\em strict}. If there are
the same values $v_{ik}=v_{ik'}$ or ratios $\frac{v_{ik}}{c_{ik}}=\frac{v_{i'k}}{c_{i'k}}$,
we always break ties in favor of the smaller capacity;
if the capacities are the same as well, ties are broken in an arbitrary but
fixed order. We say a feasible assignment $\calA$ is {\em stable} if it does not
contain any blocking pair $i$ and $k$, where $i$ strictly prefers
$k$ to $\calA_i$ and $k$ can pick a strictly better subset of jobs in the
collection $\calA_k\cup\{i\}$ (which is easy to verify).


We consider the following deferred acceptance algorithm, called \multigs.

\begin{center}
\small{}\tt{} \fbox{
\parbox{6.0in}{\hspace{0.05in} \\
[-0.05in] \underline{\multigs}
\begin{itemize}
\item Initialize all jobs to be unassigned and set $\calA=\emptyset$.
\item {\bf while} there is an unassigned job which has not proposed to all machines in $\calL_i$
      \begin{enumerate}
      \item \label{item:order}
            Among all pairs $(i,k)$ where $i$ is unassigned and $k$ is his highest ranked unproposed machine,
            pick one with the maximum $\frac{v_{ik}}{c_{ik}}$ and let $i$ propose to $k$.
      \item Let $S=\emptyset$.
      \item For each job $i'\in \calA_k\cup\{i\}$ in the order according to the preference $\calL_k$
            \begin{itemize}
            \item if $S\cup \{i'\}$ is feasible for $k$, let $S\leftarrow S\cup\{i'\}$.
            \end{itemize}
      \item If $i\in S$, set $\calA_k\cup \{i\}\setminus S$ to be unassigned and update assignment $\calA_k\leftarrow S$.
      \item Else (we must have $S=\calA_k$), set $i$ to be unassigned.
      \end{enumerate}
\item Output $\calA$.\\[-.2in]
\end{itemize}
} }
\end{center}

When there are ties in the selection of the largest ratio, i.e.,
$\frac{v_{ik}}{c_{ik}}=\frac{v_{i'k'}}{c_{i'k'}}$, to implement the
algorithm, we use the same tie-breaking rule as described above for $\calL_i$ and $\calL_k$.
Given this tie-breaking rule and the strict preferences of $\calL_i$ and $\calL_k$,
the output of the algorithm is uniquely determined. Since
every job proposes to every machine at most once, the algorithm is guaranteed to terminate.
We further note that Step~1 of the \multigs, picking a pair with the largest ratio, is necessary to
derive the following Theorem~\ref{theorem-gap-4-special-cases}.

For the special cases job value invariant and job capacity invariant
(defined below), Dughmi and Ghosh~\cite{DG10} gave 4 approximation
truthful in expectation mechanisms by an LP-based approach. As we can
see next, the combination of the stable matching algorithm
and greedy algorithm gives a universally truthful mechanism of the
same ratio for these cases, as well two other special cases.

\begin{theorem}\label{theorem-gap-4-special-cases}
Consider the following mechanism: Run either \smgreedy\ or \multigs\
with equal probability. The mechanism is universally truthful with
an approximation ratio of 4 for each of the following cases:
\begin{enumerate}
\item \label{t:i1} (Job value invariant) $v_{ik}=v_{ik'}$ for all machines $k,k'$ and any job $i$. \\[-.25in]
\item \label{t:i2} (Job capacity invariant) $c_{ik}=c_{ik'}$ for all machines $k,k'$ and any job $i$. \\[-.25in]
\item \label{t:i3} (Machine value invariant) $v_{ik}=v_{i'k}$ for all jobs $i,i'$ and any machine $k$. \\[-.25in]
\item \label{t:i4} (Machine capacity invariant) $c_{ik}=c_{i'k}$ for all jobs $i,i'$ and any machine $k$.
\end{enumerate}
\end{theorem}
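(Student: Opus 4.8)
The plan is to prove the two assertions separately: universal truthfulness of the $\tfrac12/\tfrac12$ mixture, and the $4$-approximation. Since a universally truthful mechanism is by definition a distribution over deterministic truthful mechanisms, it suffices to show that each of \smgreedy\ and \multigs\ is truthful in all four cases. The only manipulation available to agent $i$ is to report a subset $B_i\subseteq A_i$ of its compatible machines; because every job ranks machines by the public values $v_{ik}$, such a report is exactly a \emph{truncation} of the preference list $\calL_i$ (hiding a machine means declaring it unacceptable). For \smgreedy\ this is the observation already recorded in the text: a job proposes in value order and is accepted whenever it fits, so hiding machines only deletes proposal opportunities and can never raise the value of the single machine it ends on.

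The heart of truthfulness is therefore \multigs. I would first record the structural fact that the greedy-knapsack choice rule induced by $\calL_k$ is \emph{substitutable} (a job chosen from a set stays chosen in any subset containing it, since shrinking the offered set only frees capacity for the same ratio-ordered scan), so that \multigs\ is a deferred-acceptance algorithm whose output $\calA^*$ is the job-optimal stable assignment; under truthful reporting each job already receives its $\calL_i$-best stable machine, and every true machine ranks above being unmatched. In case~4 (machine capacity invariant) every job occupies the same capacity on machine $k$, so $k$ behaves as a responsive agent with fixed quota $\lfloor C_k/c_k\rfloor$ ranking jobs by $v_{ik}$, and the classical strategy-proofness of the proposing side applies verbatim. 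In cases~1--3 the machine preferences are \emph{not} responsive (the loss of responsiveness is exactly what makes general \gap\ manipulable), so I would argue truncation-proofness directly from the relevant invariant: for case~1 (job value invariant) a job's utility is $v_i$ at every machine, so it only cares about being matched, and I would use the fact that a job unmatched in the job-optimal stable assignment is unmatched in every stable assignment to rule out a beneficial truncation; cases~2 and~3 follow the same template, exploiting the fixed per-job capacity, respectively the equal per-machine values (under which a machine simply maximizes the \emph{number} of accepted jobs through a smallest-capacity-first scan).

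For the approximation I would fix an optimal assignment $O$ and compare it with the stable assignment $\calA^*$ of \multigs\ and the assignment $\calA^g$ of \smgreedy. Stability lets me charge each optimal pair $(i,k)\in O$ to one side: either job $i$ weakly prefers its own assignment, giving $v_{ik}\le v(\calA^*_i)$, or machine $k$ declines $i$ from $\calA^*_k\cup\{i\}$. Summing the job-charged pairs gives at most $\sum_i v(\calA^*_i)=v(\calA^*)$. For the machine-charged pairs I would restrict to a fixed $k$: the rejected optimal jobs form a feasible set each member of which greedy-knapsack declines, so by the standard greedy-plus-largest-item bound their total value is at most $v(\calA^*_k)$ plus a single largest compatible value on $k$. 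Summing over $k$ and invoking the invariant to control the largest-item terms (these are precisely the one-heavy-high-value-job configurations on which \smgreedy\ rather than \multigs\ performs well), I obtain $\opt\le 2\bigl(v(\calA^*)+v(\calA^g)\bigr)$, whence the mixture's expected value $\tfrac12 v(\calA^*)+\tfrac12 v(\calA^g)$ is at least $\opt/4$.

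I expect the main obstacle to be truthfulness of \multigs\ in cases~1--3, not case~4: once responsiveness is lost one cannot cite the textbook deferred-acceptance result, and must instead use the precise invariant to show that no truncation can replace a job's partner by a strictly more valuable one (or match a job that is unmatched truthfully). A secondary difficulty is the machine-side charging, where the ratio-based acceptance of \multigs\ and the value-based guarantee of \smgreedy\ must be balanced so that the largest-item losses are absorbed without inflating the constant beyond $4$.
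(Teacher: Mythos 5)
Your approximation argument is essentially the paper's own: split the optimal pairs according to whether the job weakly prefers its assignment under one of the two algorithms, charge the job-side pairs to $v(\calA^{(1)})+v(\calA^{(2)})$, and on the machine side use stability (the ratio bound plus the fact that the rejected job's capacity does not fit) to cover all but one rejected job per machine, with \smgreedy\ absorbing the single most valuable one; this gives $v(\calA^{opt})\le 2\bigl(v(\calA^{(1)})+v(\calA^{(2)})\bigr)$ exactly as in the paper. One correction there: no invariant is needed to ``control the largest-item terms'' --- the greedy rule of \smgreedy\ alone gives $v(\calA^{(1)}_k)\ge v_{ik}$ for every optimal pair $(i,k)$ whose job strictly prefers $k$ to its \smgreedy\ assignment.

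The genuine gap is in the truthfulness half, and it is the structural fact you ``record first'': the ratio-ordered greedy knapsack choice rule is \emph{not} substitutable, so \multigs\ is not a deferred-acceptance algorithm with substitutable choice functions and you cannot conclude that its output is the job-optimal stable assignment. Concretely, take $C_k=10$ and three jobs with $(v_{ik},c_{ik})=(15,5),(12,6),(5,5)$: from all three the machine keeps $\{(15,5),(5,5)\}$, but from the subset $\{(12,6),(5,5)\}$ it keeps only $(12,6)$ and rejects $(5,5)$ --- a chosen job becomes rejected on a subset, because the freed capacity admits a bulkier job earlier in the scan which then crowds out a later one. This instance is consistent with both the job-value-invariant and job-capacity-invariant cases, i.e., substitutability fails precisely in your problematic cases 1 and 2, and with it collapse the consequences you lean on there (job-optimality of the outcome and the lone-wolf/rural-hospitals fact). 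Indeed, the paper points out that \multigs\ in general need not even return a \emph{stable} assignment and gives an explicit example where it is not truthful --- both impossible if your substitutability claim held. Note also that your approximation argument presupposes stability of the \multigs\ output, so establishing stability under each invariant is not optional; it is part of what must be proved.

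What the paper does instead is prove, for each of the four invariants, a ``no divorce'' property: once \multigs\ assigns a job to a machine, that assignment is never broken (e.g., under job value invariance, the pair of globally maximal ratio is also the job's top choice after tie-breaking, and it survives every later re-scan; under machine value invariance the scan is by increasing capacity, whose choice sets are prefixes). This single property makes \multigs\ run as a greedy algorithm, which yields truthfulness directly, and simultaneously yields stability of the output, which is exactly what the approximation charging needs. Your case 4 (responsive with quota $\lfloor C_k/c_k\rfloor$) and case 3 (smallest-capacity-first, a prefix rule, hence substitutable) sketches can be salvaged, but for cases 1 and 2 you need the no-divorce argument or an equivalent replacement.
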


While the \multigs\ solves the problem nicely for the special cases,
unfortunately (and surprisingly), it is not necessarily truthful in
general; we show an example in Appendix~\ref{example-sm-alg-not-truthful}.

\subsection{Main Mechanism}\label{section-gap-mechanism}

In this section we will use \multigs\ as a critical component to
describe a universally truthful mechanism for the general \gap\
problem. We first give three
truthful mechanisms, then the main mechanism takes a uniform
distribution on these mechanisms. Throughout this section,
$\lambda>2$ is an integer and $\mu>0$ is a constant which are both fed as
parameters to the mechanisms; their values can be taken
appropriately with some conditions described at the end of the
analysis.

We first present two simple deterministic mechanisms, which try to assign those pairs with large values.
The first one is designated to deal with the pairs which
have ``large'' capacities with respect to the capacity of the
corresponding machine.

\begin{center}
\small{}\tt{} \fbox{
\parbox{5.7in}{\hspace{0.05in} \\
[-0.05in] \underline{\lsizem($\lambda$)}
\begin{itemize}
\item Remove all pairs $(i,k)$ with capacity $c_{ik} < \frac{1}{\lambda}C_k$.
\item For the remaining pairs, output an assignment by the \smgreedy\
algorithm with the restriction that each machine can take at most one job.\\[-.2in]
\end{itemize}
} }
\end{center}


The following procedure divides the capacity of each machine evenly into $\lambda$ slots
and assigns at most one job to each slot.

\begin{center}
\small{}\tt{} \fbox{
\parbox{5.7in}{\hspace{0.05in} \\
[-0.05in] \underline{\separationm($\lambda$)}
\begin{itemize}
\item Remove all pairs $(i,k)$ with capacity $c_{ik}>\frac{1}{\lambda}C_k$.
\item For the remaining pairs, output an assignment by the \smgreedy\
algorithm with the restriction that each machine can take at most $\lambda$
jobs.\\[-.2in]
\end{itemize}
} }
\end{center}

Next we will describe the key mechanism, which will apply the stable
matching algorithm \multigs\ as a critical component. In fact, we
will use a slightly different version of \multigs: For every machine
$k$ and a given number $C'_k<C_k$, we define as follows a {\em virtual
capacity} constraint $C'_k$ which provides an additional restriction on the
feasible set: Given a feasible assignment $i_1,i_2,\ldots,i_\ell$ on machine $k$
with preferences $i_1\succ i_2\succ\cdots \succ i_\ell$ on $\calL_k$,
we require $\sum_{j=1}^{\ell-1}c_{i_jk}\le C'_k$
(note that it is allowed that $\sum_{j=1}^{\ell}c_{i_jk} > C'_k$).
That is, while the total capacity of the assigned job on machine $k$
can be larger than $C'_k$, the removal of the least preferred one
must ensure that the total capacity becomes within the virtual
capacity. Running \multigs\ with virtual capacity $C'_k$ means that
in the process of the algorithm, the assigned jobs on machine $k$
always satisfy the virtual capacity constraint.

\begin{center}
\small{}\tt{} \fbox{
\parbox{5.7in}{\hspace{0.05in} \\
[-0.05in] \underline{\Gapm($\lambda,\mu$)}
\begin{enumerate}
\item Remove all pairs $(i,k)$ with capacity $c_{ik}>\frac{1}{\lambda}C_k$.
\item \label{step:2} Select each job independently at random with probability $\frac{1}{2}$ into group $T$.
\begin{itemize}
\item For each job $i\in T$, let $\calL_i$ be his preference list over machines.
\item For each machine $k$, let $\calL_k$ be its preference list over jobs in $T$.
\item Run the \multigs\ with virtual capacity $\frac{\lambda-1}{\lambda}C_k$ for each machine $k$.
\item Denote the generated assignment by $\calA^{T}$.
\end{itemize}
\item For each machine $k$, define the threshold value $t_k=\mu\cdot \frac{v(A_{k}^{T})}{C_k}$.
\item Let $R$ be the remaining jobs that are not selected in $T$, and set $\calA=\emptyset$.
\item For each job $i\in R$ in a given fixed order
\begin{itemize}
\item Let $k=\arg\max_k\left\{v_{i,k}~|~c_{ik}+c(\calA_k)\le C_k\ \textup{and}\ \frac{v_{ik}}{c_{ik}}\ge t_k\right\}$.
\item If $k$ defined above exists, let $\calA_i=k$; otherwise, let $\calA_i=\emptyset$.
\end{itemize}
\item Output $\calA$.\\[-.2in]
\end{enumerate}
} }
\end{center}

In the mechanism, we first sample, in expectation, half of the jobs
in $T$, then on this set $T$ run the stable matching algorithm \multigs. The assignment $\calA^{T}$ from $T$
gives a good estimate to the structure of the optimal solution with a high probability.
We therefore use the average $\frac{v(A_{k}^{T})}{C_k}$, multiplied by a given constant $\mu$, as a
threshold $t_k$ for the jobs that can be assigned to machine $k$. That is,
for the remaining jobs, we explicitly remove those pairs $(i,k)$
with a ratio less than the threshold, i.e., $\frac{v_{ik}}{c_{ik}}
< t_k$. The mechanism then performs in an online and greedy way, trying to have
the best possible assignment for each considered job, given the capacity constraints on the machines.

\begin{claim}
\lsizem($\lambda$) and \separationm($\lambda$) are deterministic truthful mechanisms,
and \Gapm($\lambda,\mu$) is a universally truthful mechanism.
\end{claim}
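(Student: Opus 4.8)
The plan is to treat the three mechanisms separately, but in every case to reduce truthfulness to the strategyproofness of job-proposing deferred acceptance for the proposing side, together with the observation that an agent's only feasible deviation is to \emph{hide} compatible pairs, which amounts to truncating its (value-ordered) preference list and can never help.

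First I would handle \lsizem($\lambda$) and \separationm($\lambda$). In both, the initial removal of pairs depends only on the public quantities $c_{ik}$ and $C_k$, so it is independent of the reports; given a reported set $B_i\subseteq A_i$, the surviving pairs are exactly $B_i$ intersected with a fixed public filter. Hence the only effect of a misreport by job $i$ is to delete some surviving machines from $\calL_i$, i.e. to truncate its value-ordered preference. After the filter, each mechanism runs \smgreedy, which is a job-proposing deferred acceptance under responsive, value-based preferences with a fixed per-machine quota ($1$ for \lsizem, $\lambda$ for \separationm). I would then invoke the classical strategyproofness of the proposing side in the college-admissions model (Theorem~5.16 of~\cite{roth-book}, already cited for \smgreedy): since every job is matched to at most one machine, no job can obtain a strictly more preferred (higher-value) machine by misreporting, and truncation is a special case of such a misreport. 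Thus reporting $A_i$ is dominant, and both mechanisms are deterministic truthful.

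For \Gapm($\lambda,\mu$), I would prove universal truthfulness by exhibiting it as a distribution over deterministic truthful mechanisms indexed by the coin flips of Step~\ref{step:2}. The crucial point is that the random partition into $T$ and $R$ is independent of all reports, so conditioning on it yields a well-defined deterministic mechanism, and it suffices to show each is truthful. Fix the partition. For a job $i\in T$, the output $\calA$ is built only over $R$, so $i$ receives nothing and has utility $0$ regardless of its report; truthfulness is vacuous, and in particular the possible non-truthfulness of \multigs\ flagged earlier is irrelevant, since $T$-jobs are sacrificed. For a job $i\in R$, the thresholds $t_k=\mu\cdot v(\calA^T_k)/C_k$ depend only on $\calA^T$, hence only on the reports of $T$-jobs and not on $i$; and since the $R$-jobs are processed in a fixed a priori order, when $i$'s turn comes the loads $c(\calA_k)$ are determined entirely by the earlier $R$-jobs' reports and the thresholds, again independent of $i$'s report. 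So job $i$ faces a \emph{fixed menu}: the machines $k\in B_i$ with $c_{ik}+c(\calA_k)\le C_k$ and $\frac{v_{ik}}{c_{ik}}\ge t_k$, from which the mechanism assigns the one of maximum value $v_{ik}$. As values are nonnegative, enlarging this menu can only weakly raise $i$'s assigned value, and reporting $A_i$ yields the largest possible menu; hence truthfulness is dominant. Therefore every deterministic realization is truthful, and \Gapm\ is universally truthful.

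The main obstacle, and the step I would write most carefully, is the $R$-phase argument for \Gapm: one must verify that each $R$-job's menu --- both the surviving thresholds and the already-committed machine loads --- is genuinely a constant from that job's viewpoint. This hinges on the online, fixed-order processing and on $i\notin T$ decoupling $i$ from the threshold computation; only once this independence is established does the simple ``maximum over a fixed menu'' argument yield dominance. By contrast the deterministic mechanisms are routine once deferred-acceptance strategyproofness is quoted, the only care being that the public capacity filter cannot be exploited because agents can merely withhold pairs.
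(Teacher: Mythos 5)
Your proof is correct and follows essentially the same route as the paper's: the public capacity filters make the first two mechanisms equivalent to running \smgreedy\ (job-proposing deferred acceptance) on the reported pairs, whose strategyproofness for the proposing side gives deterministic truthfulness, while for \Gapm($\lambda,\mu$) the report-independent coin flips, the zero utility of jobs in $T$, and the fixed-order processing of $R$-jobs against thresholds $t_k$ and loads that do not depend on the current job's bid yield universal truthfulness. Your explicit ``fixed menu'' argument for the $R$-phase is simply a more detailed rendering of the paper's parenthetical observation that $t_k$ comes from $\calA^T$ and hence is independent of any bid in $R$.
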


\begin{claim}\label{cl:stable_3}
The assignment $\calA^T$ returned by the
mechanism \Gapm($\lambda,\mu$) is stable with respect to all jobs in
$T$ and all machines with virtual capacity
$\frac{\lambda-1}{\lambda}C_k$.
\end{claim}

We note that the stability result established in the above claim is only with respect to the virtual capacity constraint;
without which the \multigs\ may not compute a stable assignment.
Stability is helpful in the approximation analysis of the following main mechanism.

\begin{theorem}\label{thm-gap-main-mechanism}
Consider the following mechanism \underline{\GapMainm}: Run one of the following three mechanisms
\lsizem($\lambda$), \separationm($\lambda$), \Gapm($\lambda,\mu$) with uniform probability.
Then \GapMainm\ is a universally truthful mechanism with a constant
approximation ratio for the \gap\ problem.
\end{theorem}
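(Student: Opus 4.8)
The plan is to prove the theorem by combining the three component mechanisms, each of which captures a different portion of the optimal value, and show that together they cover a constant fraction of $\opt$. Let me sketch this.

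First I would fix the optimal assignment $\calA^*$ and partition its value according to the capacity split induced by the threshold $\frac{1}{\lambda}C_k$. Write $\opt = \opt_{\mathrm{large}} + \opt_{\mathrm{small}}$, where $\opt_{\mathrm{large}}$ is the value contributed by pairs $(i,k)\in\calA^*$ with $c_{ik} > \frac{1}{\lambda}C_k$ and $\opt_{\mathrm{small}}$ the value from pairs with $c_{ik} \le \frac{1}{\lambda}C_k$. For the large-capacity part, I would argue that \lsizem($\lambda$) — which keeps exactly these large pairs and assigns at most one per machine — recovers a constant fraction of $\opt_{\mathrm{large}}$, since each machine can hold at most $\lambda-1$ such pairs, and \smgreedy\ run as a one-job-per-machine greedy is a matroid-intersection-style greedy that loses at most a constant factor. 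This bounds $\opt_{\mathrm{large}}$ in terms of the expected value of \lsizem.

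The heart of the argument is the small-capacity part, handled jointly by \separationm\ and \Gapm. The plan is to further split $\opt_{\mathrm{small}}$ by comparing each pair's ratio $\frac{v_{ik}}{c_{ik}}$ against the thresholds $t_k = \mu\cdot\frac{v(\calA^T_k)}{C_k}$. Pairs with ratio below $t_k$ contribute little total value because their aggregate capacity on machine $k$ is at most $C_k$, so their total value is at most $t_k C_k = \mu\, v(\calA^T_k)$; summing over machines, the below-threshold value is $O(\mu\cdot v(\calA^T))$, and by Claim~\ref{cl:stable_3} together with the stability-based upper bound on $\opt$ sketched in the techniques section, $v(\calA^T)$ is within a constant factor of $\opt$ (in expectation, using the random sampling). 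For the above-threshold pairs, I would use the online greedy phase of \Gapm: whenever a remaining job $i$ from $\calA^*$ fails to be assigned to its optimal machine $k$, it must be because the machine's used capacity $c(\calA_k)$ already exceeds $C_k - c_{ik} \ge \frac{\lambda-1}{\lambda}C_k$; since each assigned job has ratio at least $t_k$, the value already accumulated on $k$ is at least $t_k\cdot\frac{\lambda-1}{\lambda}C_k$, which charges the lost value against $v(\calA_k)$. Meanwhile \separationm\ guarantees that the above-threshold small pairs that \emph{can} be packed with at most $\lambda$ per machine are also captured up to a constant factor, covering the residual case where the online phase fills too few slots.

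Combining, each of $\opt_{\mathrm{large}}$, the below-threshold small value, and the above-threshold small value is charged to one of the three mechanisms up to a constant factor; since \GapMainm\ runs each with probability $\frac{1}{3}$, its expected value is at least a constant fraction of $\opt$, and universal truthfulness follows immediately from the first claim (a uniform mixture of truthful mechanisms is universally truthful). The main obstacle I anticipate is the probabilistic coupling between $\calA^T$ (computed on the sampled test set $T$) and the true optimum $\calA^*$: I must show that the thresholds $t_k$ derived from the \emph{sampled} stable assignment are simultaneously (a) low enough that the online phase of \Gapm\ still admits enough of $\calA^*$'s above-threshold mass, and (b) high enough that the below-threshold loss stays $O(\opt)$. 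This requires a concentration argument over the random $\frac{1}{2}$-sampling, leaning on the stability of $\calA^T$ under the virtual capacity $\frac{\lambda-1}{\lambda}C_k$ to transfer structural guarantees from the half-instance to the full instance, and carefully choosing $\lambda$ and $\mu$ to balance the competing losses.
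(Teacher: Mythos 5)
Your high-level skeleton (uniform mixture of three truthful mechanisms; splitting $\opt$ into large-capacity and small-capacity parts; \lsizem\ for the large part; sample-based thresholds plus an online phase for the small part) matches the paper, and the truthfulness argument is fine. But the core of your approximation argument --- the above-threshold charging --- has a genuine gap, and it is precisely the gap the paper's machinery exists to fill. You charge the value of a blocked optimal job at its optimal machine $k$ against the value the online phase has already accumulated on $k$, which you lower bound by $t_k\cdot\frac{\lambda-1}{\lambda}C_k=\mu\frac{\lambda-1}{\lambda}v(\calA^T_k)$. The problem is that this quantity bears no relation to the blocked value: the threshold $t_k$ is calibrated from the sample $T$, and a high-value optimal job that landed in $R$ is invisible to that calibration. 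Concretely, let machine $k$ have one optimal job of value $100$ and capacity $C_k/\lambda$, plus many jobs of tiny value $\epsilon$ and capacity $C_k/\lambda$ whose ratio still clears the (tiny) threshold $t_k\approx\mu(\lambda-1)\epsilon$. With constant probability the heavy job falls in $R$; if the light jobs precede it in the fixed order, they fill $k$, the heavy job is rejected, and the accumulated value on $k$ is $O(\lambda\epsilon)$ --- a vanishing fraction of the lost value. So ``charge the lost value against $v(\calA_k)$'' fails. Note also that this is not the case your \separationm\ is supposed to cover (``the online phase fills too few slots''): here the online phase fills all of machine $k$.

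The paper resolves this differently, with two ingredients missing from your proposal. First, it never compares the mechanism to $\opt$ machine-by-machine; instead it introduces an intermediate benchmark $\calA^*$, the stable assignment produced by \multigs\ on \emph{all} jobs (small pairs only, virtual capacity $\frac{\lambda-1}{\lambda}C_k$), proves $v(OPT^{small})\le\left(2+\frac{1}{\lambda-1}\right)v(\calA^*)$ by a blocking-pair argument, and then lower bounds the mechanisms against $v(\calA^*)$. This step requires Claim~\ref{cl:not_big_improve} --- for $i\in T$, $v(\calA^T_i)\ge v(\calA^*_i)$, and $v(\calA^T_k)\le\frac{\lambda}{\lambda-1}v(\calA^*_k)$ --- whose proof is the deepest stable-matching ingredient (an order-independence, job-optimality argument for the knapsack-constrained deferred acceptance); your passing appeal to ``stability of $\calA^T$'' does not substitute for it. Second, the correct dichotomy for which mechanism covers which machine is not ``enough slots filled vs.\ too few,'' but whether the top $\lambda$ jobs of $\calA^*_k$ carry at least a $\delta_1$ fraction of $v(\calA^*_k)$: if yes (heavy jobs dominate), \separationm\ wins via Claim~\ref{claim-a*-a2}; if no, then no single job dominates, Chebyshev (Lemma~\ref{lem:probability}) makes the half-sample representative, and only then does the two-case analysis of \Gapm\ (machine nearly full of above-threshold jobs, or no rejection so jobs in $\calA^*_k\cap R^+$ keep their $\calA^*$ value) go through in expectation. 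Your below-threshold bound ($\le\mu\,v(\calA^T_k)$ per machine, hence $O(\mu\cdot\opt)$ overall since $\calA^T$ is feasible) is correct, but without the benchmark $\calA^*$, the monotonicity claims, and the heavy/light machine split, the above-threshold part of your plan cannot be completed.
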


Since all three mechanisms are (universally) truthful,
\GapMainm\ is a universally truthful mechanism as well.
The constant approximation analysis is given in Appendix~\ref{appendix-main-mechanism-approx}.

\section{Concluding Remarks}

The focus of the present paper is on approximate mechanism design without money.
We propose a general framework, the generalized packing problem (\gpp), and
give constant approximation truthful mechanisms for a few special cases of \gpp.
Due to space limit, a number of interesting claims and proofs are deferred to appendix.
As our focus is on exploring domains in which there exist nearly optimal truthful mechanisms,
we do not elaborate on optimizing the approximation ratios.
Closing the gap between upper and lower bounds certainly deserves further study.
Another interesting direction of future work is to further explore other special cases of \gpp\ (e.g., interval job scheduling)
in the framework of approximate mechanism design without money. 

In our model \gpp, we assume that the subsets of items that all agents hold are disjoint.
For \matroid, \matching, and \knapsack, this assumption can be easily removed and all our results still hold.
In particular, if both agents are able to produce the same item, in the output we explicitly specify which agent contributes the item.
Further, it is even possible that both agents contribute the item simultaneously (it means the item will have multiple copies in the output),
as long as it does not violate the feasibility constraint.

Our approaches include greedy algorithms (for all problems), random sampling plus online mechanisms (for \knapsack\ and \gap),
as well as stable matching (for \gap). The fairness condition captured by stable matching helps to
manage self-interested behaviors of the participating agents in one side of a market and to bound efficiency loss for unmatched pairs.
We believe the idea of stable matching may find applications in other mechanism design (without money) problems.

In the solution to \gap, we introduce a new stable matching model where one side of the market has a knapsack constraint,
generalizing the original many-to-one matching market model.
For the new model, the deferred acceptance algorithm, \multigs, may not compute a stable assignment in general,
even if all preferences are strict. This is a significant difference between
the knapsack constrained stable matching model and the original models
where the deferred acceptance algorithm always computes a stable outcome.
It is interesting future work to explore different aspects of stable matchings in the new model, say,
existence, computation, solution structure, and economic properties (e.g., incentives).

\newpage
\appendix

\section{Matroid and Matching}\label{appendix-matching}

\subsection{Greedy Algorithm for Matroid}

\begin{proposition}
The greedy algorithm yields an optimal truthful mechanism for the
\matroid-\mul\ problem, and a 2 approximation truthful mechanism for
the \matroid-\unit\ problem.
\end{proposition}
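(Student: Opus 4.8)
The plan is to separate the two approximation guarantees (which are purely algorithmic) from the incentive analysis (which is common to both demand settings), since the proposition really bundles three independent claims.

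For the approximation guarantees I would appeal to classical matroid facts. In \matroid-\mul\ the sole constraint is independence in the matroid $\calM$, so the greedy of Section~3 is exactly the weighted matroid greedy (Rado--Edmonds), which returns a maximum-weight independent set; hence at truthful reports the mechanism is optimal. For \matroid-\unit\ I would use the observation already made in the text: the additional ``at most one item per agent'' rule is a partition matroid $\mathcal{P}$ with blocks $A_1,\dots,A_n$, so feasibility is common independence in $\calM$ and $\mathcal{P}$. The algorithm --- process items by decreasing value, add whenever independence in \emph{both} matroids is preserved --- is precisely the greedy for a two-matroid intersection, which is a $2$-approximation (the same fact the paper invokes for \matching-\unit\ with three matroids). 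If an explicit bound is wanted, I would charge each element $e$ of an optimal solution $O$ that greedy rejects to a higher-valued element of the greedy output $G$ lying in the fundamental circuit of $e$ in whichever of $\calM,\mathcal{P}$ was violated when $e$ was examined; each element of $G$ is charged at most once per matroid, hence at most twice, giving $v(O)\le 2\,v(G)$.

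The core is truthfulness, which I would prove uniformly for both settings. Since a bid must satisfy $B_i\subseteq A_i$, it suffices to establish the monotone step: for any report $B\subseteq A_i$ and any $a\in A_i\setminus B$, agent $i$'s utility under $B\cup\{a\}$ is at least his utility under $B$, with the other agents' reports fixed; telescoping along a chain $B_i\subseteq\cdots\subseteq A_i$ then shows the truthful report $A_i$ weakly dominates every $B_i$. I would prove the step by \emph{coupling the two greedy runs}: because items are processed in one fixed value order and the runs differ only in whether $a$ is present, they make identical decisions on every item preceding $a$; let $S$ be this common partial selection just before $a$ is examined. If $S\cup\{a\}$ is infeasible, $a$ is rejected and the runs stay identical, so agent $i$'s utility is unchanged. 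If $a$ is accepted, then in \matroid-\unit\ agent $i$ is now served with value $v_a$, whereas in the $B$-run he could at best receive a later (hence value $\le v_a$) item, so his utility weakly rises; in \matroid-\mul\ I would instead invoke the matroid swap lemma, namely that inserting $a$ turns the greedy basis $G$ into $G\cup\{a\}\setminus\{e^\ast\}$ where $e^\ast$ is the minimum-value element of the fundamental circuit of $a$ and $v_a>v_{e^\ast}$, so agent $i$ gains $v_a$ and loses value only if $e^\ast$ is his own item, a net strict gain.

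The step I expect to be the main obstacle is exactly this last comparison, because inserting $a$ can change the selections of \emph{other} agents further down the order (a matroid-wide effect), so one cannot argue that ``nothing else changes.'' The coupling circumvents this by tracking only agent $i$'s own gains and losses and using that any displaced own-item necessarily has value below $v_a$. A secondary subtlety is that \matroid-\unit\ is a matroid \emph{intersection} rather than a matroid, so the clean single-swap structure is unavailable there and must be replaced by the ``an agent is served at most once'' argument above; checking that this coupling is valid under the fixed tie-breaking rule is the one place where care is needed.
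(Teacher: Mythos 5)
Your proposal is correct, and while the \matroid-\unit\ part (truthfulness via the coupled runs, $2$-approximation via two-matroid-intersection greedy) matches the paper in substance, your argument for \matroid-\mul\ truthfulness takes a genuinely different route. The paper compares the truthful run against an \emph{arbitrary} underreport $B_i\subset A_i$ in one shot: denoting by $T_j$, $T'_j$ the selections of the two coupled greedy runs after processing item $a_j$, it proves two global invariants --- (i) $|T_j|\ge |T'_j|$ for every $j$, and (ii) every item in $T_j\setminus T'_j$ is one of agent $i$'s hidden items and is won by $i$ --- and then uses the decreasing value order to conclude that $i$'s truthful winnings dominate his deviating winnings. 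You instead telescope over single-element additions $B\mapsto B\cup\{a\}$ and settle each step by the classical one-swap property of the greedy basis under element insertion (the matroid analogue of the MST cycle rule): the output changes by adding $a$ and deleting at most one element $e^*$ with $v_{e^*}\le v_a$, so agent $i$'s utility weakly rises whether or not $e^*$ is his. Both are valid exchange arguments; yours is more modular, isolating a clean monotonicity statement and delegating the combinatorics to a standard dynamic-matroid fact, whereas the paper's is self-contained at the cost of the more delicate global counting. To make your version complete you should actually prove the swap lemma (the coupled runs agree before $a$; after the first divergence the two selections are bases of a common set, hence have equal spans and evolve identically thereafter), cover the degenerate case where no element is displaced, and weaken ``a net strict gain'' to a weak gain, since $v_a-v_{e^*}$ can vanish under ties --- weak dominance is all that truthfulness requires.
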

\begin{proof}
For the \matroid-\mul\ problem, it is well-known that the greedy
algorithm computes an optimal independent set. To prove the
truthfulness, for any fixed bids of other agents, we consider the
bidding behavior of an agent $i$. Assume the order of the items
handled by the greedy algorithm is given by $v_1\ge v_2\ge \cdots\ge
v_n$. Let us denote by $T$ the truthful bids of all agents and by
$T'$ the bids, where agent $i$ has reported a proper subset
$B_i\subset A_i$ instead of $A_i$ and the rest have reported
truthfully. We consider the running process of the greedy algorithm
on the bids $T$ and $T'$; we denote by $T_j$ and $T'_j$,
respectively, the subsets of items that the algorithm selects when
passing the item $a_j$.
For that we observe the following.
\begin{itemize}
\item For any $j$, $|T_j|\ge |T'_j|$. Assume otherwise and consider the
smallest $j$ with $|T_j| < |T'_j|$. Since $B_i$ is a subset of
$A_i$, all items in $T'_j$ are available to the algorithm run for
the bids $T$. There is an item $a_{j'}\in T'_j\setminus T_j$ such
that $T_j\cup \{a_{j'}\}$ is an independent set by definition of the
matroid and since $|T_j| < |T'_j|$ (in case of multiple items, we
pick the item with the smallest index). This implies that
$T_{j'}\cup \{a_{j'}\}$ should be an independent set as well, since
$T_{j'}\subseteq T_j$. Hence, the greedy algorithm should add item
$a_{j'}$ to $T_{j'}$, a contradiction.

\item For any item $a_j\in T_j\setminus T'_j$, it must hold true that
$a_j\in A_i\setminus B_i$ and agent $i$ wins item $a_j$ when bidding
$A_i$. Assume otherwise and consider the smallest $j$, such that
$a_j\in T_j\setminus T'_j$, and either $a_j\notin A_i\setminus B_i$
or another agent wins item $a_j$ for the bids $T$. For either case,
item $a_j$ is also available to the algorithm for bids $T'$
(however, the algorithm cannot pick the item, as the set it would
obtain would not be an independent set). Thus we have
$T'_{j-1}=T'_j$ and $T_j=T_{i-1}\cup\{a_j\}$. Let $\delta =
|T_{j-1}| - |T'_{j-1}|\ge 0$, then we have $\delta+1=|T_j|-|T'_j|$.
Hence, by the definition of a matroid one can find $\delta+1$ items
in $T_j\setminus T'_j$, denoted by $X$, such that $T'_j\cup X$ is an
independent set. Further, $a_j\notin X$, because otherwise, the
algorithm would add $a_j$ into $T'_j$. Then by the choice of $j$, we
have $X\subseteq A_i\setminus B_i$. Now observe that
\[|T'_{j-1}\cup X|=|T'_{j-1}|+\delta+1=|T_{j-1}|+1\]
Hence, there is an item $a_{j'}\in T'_{j-1}\cup X \setminus T_{j-1}$
such that $j'\le j-1$ and $T_{j-1}\cup\{a_{j'}\}$ is an independent
set. Therefore, any subset of $T_{j-1}\cup\{a_{j'}\}$ is an
independent set as well. Moreover, we have $a_{j'}\in
T'_{j-1}\setminus T_{j-1}$, since $X\subseteq T_{j-1}$. Thus, when
the algorithm reaches item $a_{j'}$, it should be added into
$T_{j'}\subseteq T_{j-1}$, a contradiction.
\end{itemize}

Given by the above two observations and the greedy rule, we can see
that the total value of the items that $i$ wins when bidding $A_i$
is always at least that of when bidding $B_i$. This implies that it
is always to the best interest of every agent to report his true
information.

For the \matroid-\unit\ problem, the truthfulness follows trivially
from the facts that one can pick at most one item from every agent;
and that the greedy algorithm always picks the item with the maximum
possible value meanwhile keeping feasibility. The \matroid-\unit\
problem is a two matroids intersection problem, where one matroid is
the original one that defines the feasibility constraint and the
other one is given by the condition that one can pick at most one
item from every agent, which defines a partition matroid over all
items. It is known that the greedy algorithm gives a 2 approximation
for the two-matroid intersection problem, which implies the desired
ratio.
\end{proof}

\subsection{Deterministic Lower Bound}

\begin{proposition}\label{prop-lower-bound}
No deterministic truthful mechanism can have an approximation ratio
better than 2 for \matroid-\unit, \matching-\unit, and
\matching-\mul.
\end{proposition}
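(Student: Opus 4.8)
The plan is to prove all three lower bounds with a single type of construction: a fixed instance on two agents, each holding two items, in which there are exactly two optimal feasible configurations, and these two configurations are \emph{symmetric} --- each satisfies the favorite item of exactly one of the two agents, while the two favorites are mutually infeasible. A deterministic mechanism must commit to one of them, thereby denying one agent its favorite; I will then let that agent hide its losing item and use truthfulness to force the mechanism into a configuration of value barely above half the optimum. Parametrizing the values by a small $\delta>0$ will drive the ratio to $2$, showing that no fixed ratio $\rho<2$ is attainable.

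Concretely, for \matching\ I would take the $4$-cycle on vertices $\{1,2,3,4\}$ with edges $(1,3),(1,4),(2,3),(2,4)$, give agent $A$ the two edges at vertex $1$ and agent $B$ the two edges at vertex $2$, and set $v_{(1,4)}=v_{(2,4)}=1+\delta$ (each agent's favorite) and $v_{(1,3)}=v_{(2,3)}=1$. The only two feasible matchings of size two are $\{(1,3),(2,4)\}$ and $\{(1,4),(2,3)\}$, both of value $2+\delta$, so $\opt=2+\delta$, and each favors exactly one agent. Because the two edges of each agent share a vertex, no feasible matching ever selects two edges from the same agent; hence this single instance behaves identically under \unit\ and \mul, covering \matching-\unit\ and \matching-\mul\ at once.

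For \matroid-\unit\ I would realize the same conflict pattern by a partition matroid. With items $a_1,a_2$ held by $A$ and $b_1,b_2$ held by $B$, take the two matroid groups $\{a_1,b_1\}$ and $\{a_2,b_2\}$, each allowing at most one selected item; together with the \unit\ constraint (at most one item per agent) the only feasible pairs are $\{a_1,b_2\}$ and $\{a_2,b_1\}$. Setting $v_{a_2}=v_{b_2}=1+\delta$ and $v_{a_1}=v_{b_1}=1$ reproduces the same symmetric pair of optima of value $2+\delta$ with conflicting favorites $a_2,b_2$ (which lie in the same group). I note that a uniform matroid will not do --- it imposes no cross-agent conflict, so the mechanism could simply award each agent its favorite; the point of the partition groups is precisely to make the two favorites incompatible.

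The common manipulation argument runs as follows. If on the truthful profile the mechanism outputs anything other than one of the two optimal configurations, its value is at most $1+\delta$ and the ratio is already $\tfrac{2+\delta}{1+\delta}$, so assume it outputs one of them, say the one giving agent $A$ only its non-favorite item (utility $1$); the other case is symmetric by swapping the roles of $A$ and $B$. Now let $A$ deviate by reporting only its favorite item (a legal subset report). In the resulting profile the unique optimum still has value $2+\delta$ and awards $A$ its favorite, so if the mechanism selected $A$'s favorite there $A$ would get $1+\delta>1$; truthfulness therefore forbids this, which pins the mechanism's output to the items of $B$ alone --- of value at most $1+\delta$. Hence on the deviated instance the ratio is at least $\tfrac{2+\delta}{1+\delta}$, which tends to $2$ as $\delta\to 0$. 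I expect the main work to be the bookkeeping that makes this argument airtight: verifying that the two claimed configurations are the \emph{only} value-$(2+\delta)$ feasible sets, checking the symmetric case and the degenerate ``non-optimal output'' case, and --- the genuinely non-obvious step --- exhibiting the matroid (a partition matroid rather than a uniform one) that encodes the required cross-agent conflict.
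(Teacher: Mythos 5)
Your proof is correct and takes essentially the same approach as the paper: the paper uses the same two-agent construction (a 4-cycle for \matching\ with both favorites incident to a common vertex, and a two-block partition matroid for \matroid-\unit, isomorphic to yours), the same deviation in which the agent denied its favorite hides its non-favorite item, and the same limiting argument as the value gap tends to $0$. The only cosmetic difference is that you argue the contrapositive (truthfulness forces value at most $1+\delta$ on the deviated instance), whereas the paper assumes a ratio better than $2-\epsilon$ and derives a profitable deviation; these are logically the same proof.
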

\begin{proof}
For the \matching\ problem, consider the graph containing four edges
with values $v(t_1,u_1)=v(t_2,u_1)=1+\epsilon$ and
$v(t_1,u_2)=v(t_2,u_2)=1$, where $\epsilon>0$ is sufficiently small.
There are two agents, where the first agent $i_1$ holds edges
$(t_1,u_1)$ and $(t_1,u_2)$ and the second agent $i_2$ holds edges
$(t_2,u_1)$ and $(t_2,u_2)$. The optimal solution either picks
$\{(t_1,u_1),(t_2,u_2)\}$ or $\{(t_1,u_2),(t_2,u_1)\}$ with a total
value of $2+\epsilon$. If there is a deterministic truthful
mechanism with an approximation ratio better than $2-\epsilon$, then
it has to choose one of the optimal solutions; assume without loss
of generality that it outputs $(t_1,u_1)$ and $(t_2,u_2)$. If $i_2$
hides the edge $(t_2,u_2)$, the mechanism still needs to guarantee
the $2-\epsilon$ approximation for the new instance; hence,
$\{(t_1,u_2),(t_2,u_1)\}$ will be the output. The utility of $i_2$
has therefore increased from $1$ to $1+\epsilon$, which implies that
the mechanism is not truthful.

For the \matroid-\unit\ problem, consider a partition matroid with
ground set $\{a_1,a_2,a_3,a_4\}$. An independent set can take at
most one item from $\{a_1,a_2\}$ and at most one item from
$\{a_3,a_4\}$. The values of the items are $v_1=v_2=1+\epsilon$ and
$v_3=v_4=1$. There are two agents, where the first agent holds
$\{a_1,a_3\}$ and the second agent holds $\{a_2,a_4\}$. The rest of
the analysis is then the same as the above argument.
\end{proof}

We note that the above constructions can be easily generalized to
\knapsack-\unit, \knapsack-\mul, and \gap, with the same lower bound.

\subsection{Examples for \matching}\label{appendix-example}

We claim that the greedy algorithm is not truthful for
\matching-\mul. Consider the example of the following Figure~(B),
where agent $i$ holds three edges $(t_1,u_1),(t_3,u_4),(t_4,u_3)$,
and all other edges are held by separate agents. If all agents
report their edges truthfully, the greedy algorithm will output
$\{(t_1,u_1),(t_2,u_3),(t_3,u_2)\}$, which gives utility 10 to agent
$i$. However, if $i$ hides the edge $(t_1,u_1)$, then
$\{(t_1,u_2),(t_2,u_1),(t_3,u_4),(t_4,u_3)\}$ will be generated by
the greedy algorithm, which gives utility 12 to agent $i$.

\begin{figure}[ht]
\begin{center}
\includegraphics[scale = 0.9]{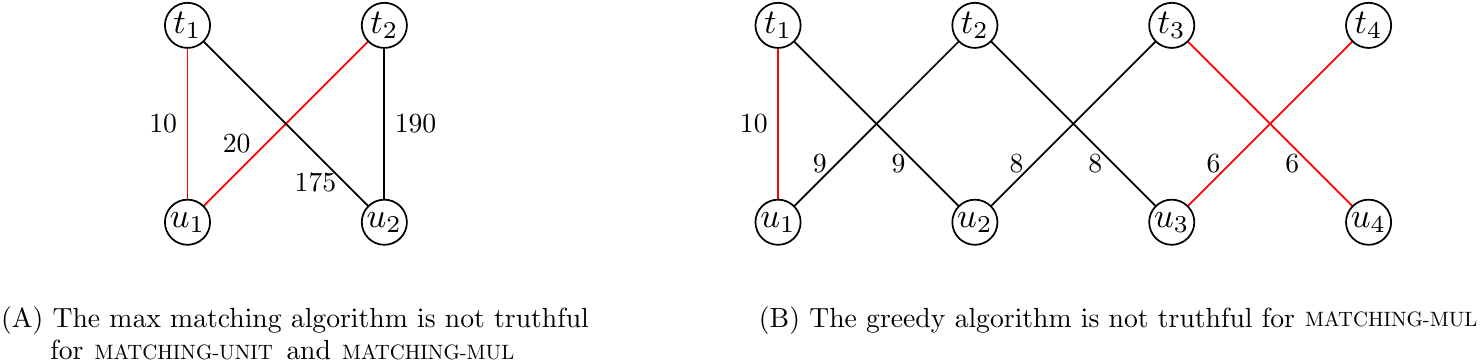}
\end{center}\label{example-matching-greedy}
\end{figure}
\vspace{-0.1in}

In addition, another natural strategy, the maximum matching
algorithm, while yielding an optimal solution, also cannot guarantee
truthfulness for \matching-\mul, even for \matching-\unit. Consider
the example in the above Figure~(A), where agent $i$ holds two edges
$(t_1,u_1)$ and $(t_2,u_1)$. If he bids truthfully, his utility will
be 10 from edge $(t_1,u_1)$; if he hides the edge, then $(t_2,u_1)$
will be matched and his utility will be 20.

\subsection{Mechanism for \matching-\mul}\label{appendix-mechanism-matching-mul}

In this section, we will present a simple $O(\log(m))$ approximation
truthful in expectation mechanism for the \matching-\mul\ problem.
(Recall that $m$ is the number of different items.)

\begin{center}
\small{}\tt{} \fbox{
\parbox{6.3in}{\hspace{0.05in} \\
\underline{\matching-{\sc alg}}: Given submitted bids, let $G=(V,E)$ be the resulting graph
\begin{itemize}
\item Let $v_{\max}=\max_{e\in E}v_e$ and $V$ be the integer s.t. $2^{V-1}<v_{\max}\le 2^V$.
\item For any given fixed order of agents $i_1,\ldots,i_n$, let
$i^*$ be the first one who claims an edge of value $v_{\max}$.
\item Partition all edges according to their values into $\log(m)$ groups:
$$\big(2^{V-\log(m)-1},2^{V-\log(m)}\big],\ldots,\left(2^{V-2},2^{V-1}\right],\left(2^{V-1},2^V\right]$$
(ignore those edges with values less than $2^{V-\log(m)-1}<
v_{\max}/m$).
\item Pick a group uniformly at random and consider only the edges in the picked group.
\item Initially, let $S=\emptyset$, and for each agent $i$ in the order $i^*,i_1,\ldots,i_n$
      \begin{itemize}
      \item Let $S_i$ be a maximum matching restricted to the bid edges of agent $i$ and the set of unmatched vertices by $S$.
      \item Let $S_i$ be the set of edges that agent $i$ wins.
      \item Let $S\leftarrow S\cup S_i$.
      \end{itemize}
\item Output $S$.\\[-.2in]
\end{itemize}
} }
\end{center}

\begin{proposition}
The above \matching-{\sc alg} gives an $O(\log(m))$-approximation
for the \matching-\mul\ problem and is truthful in expectation.
\end{proposition}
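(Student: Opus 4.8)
The plan is to prove the two assertions separately: the approximation guarantee by a value-bucketing argument, and truthfulness in expectation by a case analysis that isolates the one agent who can move the bucket boundaries. Throughout I read the per-agent step ``maximum matching'' as a maximum-\emph{weight} matching within the chosen bucket, since it is weight-monotonicity of that step that will drive truthfulness.

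For the approximation, I would first record $\opt\ge v_{\max}$, as the single most valuable edge is already a feasible matching. The discarded edges (value below $2^{V-\log m-1}=2^{V}/(2m)$) total less than $m\cdot 2^{V}/(2m)=2^{V-1}<v_{\max}$ over the whole graph, because there are at most $m$ edges; hence, writing $\opt'$ for $\opt$ restricted to the retained edges, I get $\opt'\ge \opt-v_{\max}$ and also $\opt'\ge v_{\max}$ (it contains the top edge), so $\opt'>\opt/2$. Next I would fix the chosen bucket $(2^{j-1},2^{j}]$ and observe that the agent-by-agent procedure outputs a matching $S$ that is \emph{maximal} inside the bucket: if some bucket edge $(u,w)$ had both endpoints free at the end they were free when its owner was processed, so (all weights being positive) the owner's maximum-weight matching would have covered one of them, a contradiction. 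A maximal matching has at least half the cardinality of a maximum one, and since all bucket values lie within a factor $2$ this yields $v(S)\ge \tfrac14\,\opt_j$, where $\opt_j$ is the optimum using only bucket-$j$ edges. Since $\sum_j\opt_j\ge\opt'$ and each bucket is picked with probability $1/L$ for $L=O(\log m)$, I obtain $\ex[v(S)]\ge \tfrac{1}{4L}\opt'>\tfrac{\opt}{8L}=\Omega(\opt/\log m)$. Note the maximality argument is order-independent, so processing $i^{*}$ first is needed only for truthfulness, not here.

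For truthfulness the clean case is an agent $i$ whose reported edges never attain $v_{\max}$: such an agent is never $i^{*}$ and cannot change $v_{\max}$ by hiding edges, so the bucket partition, the identity of $i^{*}$, and everything processed before $i$'s turn are all independent of $i$'s report. Within each bucket $i$ then wins a maximum-weight matching of its reported edges on a fixed set of free vertices, and reporting all its edges is weakly optimal bucket by bucket, hence in expectation. The delicate case is $i=i^{*}$. Because $i^{*}$ is processed first, in every bucket it obtains the unconstrained maximum-weight matching, call it $m_I$, of its \emph{true} edges lying in that dyadic interval $I=(2^{t-1},2^{t}]$; this is the most value any report can extract from $I$. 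Thus truthful utility equals $\tfrac1L\sum_{I\in W}m_I$, where $W$ is the window of $L$ dyadic intervals topped by $v_{\max}$, whereas any deviation yields at most $\tfrac1L\sum_{I\in W'}m_I$ for the (possibly shifted) window $W'$ of the deviated instance.

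The heart of the proof, and the step I expect to be the main obstacle, is showing $\sum_{I\in W}m_I\ge\sum_{I\in W'}m_I$. The window can shift only downward, and only if $i$ drops every edge above $2^{V-1}$, so that the top interval $\hat I\ni v_{\max}$ lies in $W\setminus W'$ with $m_{\hat I}\ge v_{\max}$. Every interval in $W'\setminus W$ sits below the retained range, so its edges have value at most $2^{V-\log m-1}=2^{V}/(2m)$; with at most $m$ edges the total available weight there is at most $m\cdot 2^{V}/(2m)=2^{V-1}<v_{\max}$. After cancelling the shared intervals, $\sum_{W\setminus W'}m_I\ge v_{\max}>2^{V-1}\ge\sum_{W'\setminus W}m_I$, so no deviation helps. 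This is precisely where the averaging over buckets is essential: for a fixed bucket, moving the boundary by hiding the top edge would relabel which edges fall in ``that'' bucket and could be profitable, so the mechanism is only truthful in expectation rather than universally truthful.
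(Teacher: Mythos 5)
Your proof is correct and follows essentially the same route as the paper: a per-bucket maximal-matching argument (a factor of $4$ inside the chosen bucket, a factor of $2$ from the discarded small edges, and the $1/\log m$ sampling probability) for the approximation, and for truthfulness the same gain/loss accounting in which the deviating $i^*$ loses the top bucket worth at least $v_{\max}$ while the newly exposed low-value buckets are worth at most $2^{V-1}<v_{\max}$ in total, with priority/monotonicity handling the common buckets. The only bookkeeping omission is an agent whose edges tie $v_{\max}$ but who is not $i^*$ (it falls in neither of your two cases); since such an agent can change neither the partition, nor the identity of $i^*$, nor anything processed before its turn, your clean-case argument applies to it verbatim.
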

\begin{proof}
We first analyze the performance of the algorithm. Note that the
algorithm returns a maximal matching (that is one cannot add any
extra edge to it) for the chosen group. Since the values of all
edges in each group differ by a factor of at most 2, the algorithm
yields a solution that is within a constant factor to the maximum
matching (a matching of the maximal possible weight) of the selected
group. For all the edges with values less than $v_{\max}/m$, their
summation is at most $v_{\max}$, which is at most the optimal
solution of the edges in the group $[v_{\max}/2,v_{\max}]$. Since
the algorithm uniformly chooses a group at random and there are
$\log(m)$ groups in total, the algorithm yields an
$O(\log(m))$-approximation solution to the optimal.

Next we analyze bidding strategies for all agents. The bid of any
agent $i$, which does not have an edge of value $v_{\max}$, does not
affect the way we partition edges into $m$ groups. For any selected
group, the algorithm runs a greedy strategy for each individual
agent in a given fixed order. Hence, it is a dominant strategy for
the agent to report his true information. The same argument extends
to the case when there are at least two agents who claim edges of
the maximum value $v_{\max}$. It remains to consider the case when
$i^*$ is the only agent who claims an edge of the maximum value
$v_{\max}$. May $i^*$ obtain a better outcome by hiding the edge?
Assume that $i^*$ reports another set of edges and $v'_{\max}$ is
the new maximum value for the new instance where
$2^{V'-1}<v'_{\max}\le 2^{V'}$ for some $V'$, then the new partition
of the groups will be

$$\left(2^{V'-\log(m)-1},2^{V'-\log(m)}\right],\ldots,\left(2^{V'-2},2^{V'-1}\right],\left(2^{V'-1},2^{V'}\right]$$

The algorithm will still pick one group uniformly at random and
perform a greedy maximum matching strategy for the selected group.
While agent $i^*$ might be able to obtain more utility from the
groups with smaller values (as they were not available in the
instance with bid $v_{\max}$), the expected benefit is capped by
$\frac{1}{\log(m)}\cdot m \cdot v_{\max}/m=v_{\max}/\log(m)$. On the
other hand, agent $i^*$ loses the chance for the group
$\left(2^{V-1},2^V\right]$ where $v_{\max}$ stays in the new
instance; since $i^*$ has the first priority among all agents, his
utility from the group is at least $v_{\max}$. Hence, $i^*$ also
loses an expected utility of at least $v_{\max}/\log(m)$. For all
other groups that are common in the two instances, the utility that
$i^*$ obtains in the instance with bid $v_{\max}$ is not smaller as
he has the highest priority. Therefore, $i^*$ cannot obtain a higher
expected utility by bidding untruthfully.
\end{proof}

\vspace{-0.2in}
\section{Knapsack}\label{appendix-knapsack}

\newcommand{\sampleksu}{{\sc ks-unit-sample}}
\newcommand{\mechanismksuone}{{\sc ks-unit-mechanism}}

\newcommand{\lagentksm}{{\sc ks-mul-large-agent}}
\newcommand{\sampleksm}{{\sc ks-mul-sample}}
\newcommand{\mechanismksm}{{\sc ks-mul-mechanism}}

\subsection{Mechanism for \knapsack-\unit}

For \knapsack-\unit, we can pick at most one item from the agent
$i$. We denote the aggregate capacity of a set of items $S$ by
$c(S)$ and the aggregate value by $v(S)$. Let $\lambda$ be a sufficiently large
integer, which will be fed as a parameter to our mechanisms. Its value will be determined appropriately at the end of the analysis.
The following is our sample mechanism.

\begin{center}
\small{}\tt{} \fbox{
\parbox{6.0in}{\hspace{0.05in} \\
[-0.05in] \underline{\sampleksu($\lambda$)}
\begin{enumerate}
\item Delete all items with capacity $c_j>\frac{C}{\lambda}$.
\item Pick each agent independently at random with probability $\frac{1}{2}$ into group $T$.
\item Let $R$ be the remaining agents.
\item Run the greedy algorithm for $T$; let $V$ be the total value of the solution.
\item Let $S\leftarrow \emptyset$.
\item For each agent $i\in R$ in a fixed order
\begin{itemize}
  \item Let $X\leftarrow A_i\bigcap \left\{j ~\big|~ \frac{v_j}{c_j}\geq \frac{V}{2C},~ c(S)+c_j \leq C\right\}$.
  \item If $X\neq \emptyset$, let $j^*=\arg\max_{j\in X}v_j$ and $S \leftarrow S \cup \{j^* \}$.
\end{itemize}
\item Output $S$.\\[-.2in]
\end{enumerate}
} }
\end{center}

We first prove a useful lemma, which shows that in the optimal
solution not too many items fail to pass the threshold
$\frac{V}{2C}$.

\begin{claim}\label{lem:half}
Let OPT be the optimal solution for \knapsack-\unit\ problem. Then
$$\sum\limits_{\substack{j\in OPT\\ \frac{v_j}{c_j} \geq \frac{v(OPT)}{2C}}} v_j \geq \frac{1}{2} v(OPT). $$
\end{claim}

\begin{proof}
Assume otherwise, then we have
\begin{eqnarray*}
v(OPT) &=& \sum_{j\in OPT} v_j = \sum_{\substack{j\in OPT\\ \frac{v_j}{c_j} \geq \frac{v(OPT)}{2C}}} v_j +
\sum_{\substack{j\in OPT\\ \frac{v_j}{c_j} < \frac{v(OPT)}{2C}}} v_j \\
&<&  \frac{1}{2} v(OPT) + \frac{v(OPT)}{2C} \cdot \sum_{\substack{j\in OPT\\ \frac{v_j}{c_j} < \frac{v(OPT)}{2C}}} c_i \\
&\leq& \frac{1}{2} v(OPT) + \frac{1}{2}v(OPT) = v(OPT),
\end{eqnarray*}
a contradiction.
\end{proof}

Note that the value $V$ computed in the mechanism is at most
$v(OPT)$. Therefore,
$$\sum_{\substack{j\in OPT\\ \frac{v_j}{c_j} \geq \frac{V}{2C}}} v_j \ge \sum_{\substack{j\in OPT\\ \frac{v_j}{c_j} \geq \frac{v(OPT)}{2C}}} v_j \geq \frac{1}{2} v(OPT). $$

The following is our mechanism for \knapsack-\unit.

\begin{center}
\small{}\tt{} \fbox{
\parbox{5.0in}{\hspace{0.05in} \\
[-0.05in] \underline{\mechanismksuone}
\begin{enumerate}
\item With probability $\frac{1}{2}$, output an item with the largest value.
\item With probability $\frac{1}{2}$, run \sampleksu($\lambda$).
\end{enumerate}
} }
\end{center}

\begin{theorem}\label{theorem-ks-unit}
\mechanismksuone\ is universally truthful and has a constant
approximation ratio.
\end{theorem}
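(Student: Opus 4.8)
The plan is to exhibit \mechanismksuone\ as a distribution over deterministic truthful mechanisms. The first branch, which outputs a single item of largest value, is deterministic: an agent who does not own a globally maximal item has utility $0$ and cannot change the identity of the winning item by hiding some of his own items, while the owner of a maximal item can only lose it by hiding, so truth-telling is dominant. For the second branch I would fix the coin flips that determine the partition into $T$ and $R$ (these depend only on the identities of the agents, not on their reports), which makes \sampleksu($\lambda$) deterministic. Agents in $T$ never receive an item and are therefore indifferent; an agent $i\in R$ faces a threshold $\frac{V}{2C}$ and a prefix set $S$ that are both determined independently of his own report, and reporting his full set only enlarges the candidate set $X$, so the value $\max_{j\in X}v_j$ of his single won item can only weakly increase. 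Every realization is thus truthful, and \mechanismksuone\ is universally truthful.

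\textbf{Approximation: setup.} I would first split $v(OPT)$ by capacity. Any feasible solution contains at most $\lambda-1$ items of capacity exceeding $\frac{C}{\lambda}$, so the ``large'' part of $OPT$ has value at most $(\lambda-1)v_{\max}$, where $v_{\max}$ is the largest item value; this is absorbed by the first branch, which contributes $\tfrac12 v_{\max}$ in expectation. It then suffices to compare \sampleksu($\lambda$) against $v(OPT')$, the optimum over the small items alone (which dominates the small part of $OPT$). Using the \emph{fixed} reference density $\beta=\frac{v(OPT')}{2C}$, Claim~\ref{lem:half} supplies a fixed subset $P\subseteq OPT'$ of density at least $\beta$ with $v(P)\ge\tfrac12 v(OPT')$; since the learned threshold satisfies $\frac{V}{2C}\le\beta$, every item of $P$ qualifies during the online phase, which removes the dependence of $P$ on the sample.

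\textbf{Approximation: the core.} The heart is a two-case bound on the greedy phase over $R$. Writing $P_R$ for the $P$-items whose agents land in $R$, the dichotomy is: if the knapsack never reaches load $\frac{\lambda-1}{\lambda}C$, then every small qualifying item fits whenever it is processed, so each (distinct) agent owning a $P_R$-item contributes at least that item's value, giving $v(S)\ge v(P_R)$; otherwise the knapsack is filled past $\frac{\lambda-1}{\lambda}C$ with items of density at least $\frac{V}{2C}$, whence $v(S)\ge\frac{\lambda-1}{2\lambda}V$. To convert this into $\ex[v(S)]\ge c\,v(OPT')$ I would split on whether one small item already carries a $\frac1\lambda$-fraction of $v(OPT')$. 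If it does, the max-value branch alone is an $O(\lambda)$-approximation. If not, every item value is below $\frac{v(OPT')}{\lambda}$, and I would apply Chebyshev's inequality (with $\lambda$ a large enough constant) to the independent agent-inclusion bits to show that $v(P_R)\gtrsim v(OPT')$ and $V\gtrsim v(OPT')$ both hold with constant probability; on that event the two-case bound gives $v(S)\ge c\,v(OPT')$ in either case. Finally I would combine $\ex[\mechanismksuone]\ge\tfrac12 v_{\max}$ with $\ex[\mechanismksuone]\ge c'\,v(OPT')$ against $v(OPT)\le(\lambda-1)v_{\max}+v(OPT')$ to obtain a constant ratio for fixed $\lambda$.

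\textbf{Main obstacle.} The delicate point is the last probabilistic combination. The learned quantity $V$ is a function of the sampled side $T$, whereas the harvestable profit $v(P_R)$ lives on the complementary side $R$, so the two ``good'' events are correlated; moreover $V$ need not concentrate at all when a single item dominates $v(OPT')$. What makes the argument go through is that this dominant-item regime is precisely what the max-value branch is designed to cover, while outside it the bounded second moment of the sampled value lets Chebyshev force both events to hold simultaneously. Balancing the resulting constants against the $(\lambda-1)v_{\max}$ loss from large items is what fixes the admissible value of $\lambda$.
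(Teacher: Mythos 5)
Your proposal is correct and follows essentially the same route as the paper's proof: the same truthfulness argument (fix the coins, $T$-agents are indifferent, $R$-agents receive their best feasible item under a threshold and prefix independent of their own report), the same decomposition into a max-value branch covering large-capacity and dominant items plus the sampling branch, the same use of Claim~\ref{lem:half} to extract a high-density half of the benchmark, Chebyshev-type concentration (the paper's Lemma~\ref{lem:probability}) applied to both sides of the sample, and the same two-case analysis (knapsack nearly full, giving roughly $\frac{\lambda-1}{2\lambda}V$, versus every qualifying optimal item dominated by the item chosen from its agent). Your refinements --- benchmarking against the small-item optimum $OPT'$ so that $V\le v(OPT')$ is immediate, and stating the dichotomy via the load threshold $\frac{\lambda-1}{\lambda}C$ --- only tighten steps the paper phrases more loosely, and do not constitute a different approach.
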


\begin{proof}
The truthfulness for the first part where we select a single item
with the largest value is straightforward. For
\sampleksu($\lambda$), agents in $T$ do not have incentive to lie as
none of their items can be selected. For each agent $i\in R$, since we
try to select the best possible item from him given the capacity and
threshold constraints, $i$ does not have an incentive to lie. Thus,
both \sampleksu($\lambda$) and \mechanismksuone\ are universally
truthful.

Next we analyze the approximation ratio of \mechanismksuone.
Assume without loss of generality that the optimal value is $1$,
since we can scale simultaneously all the values without affecting
the outcome. If there is an item $j$ with value $v_j\ge
\frac{1}{2\lambda}$, then with probability $\frac{1}{2}$, we will
pick an item with the highest value, which is at least $v_j$. This
already implies a constant approximation. Therefore, we assume that
the value of each item is less than $\frac{1}{2\lambda}$.  Now if we
delete all items with size larger than $\frac{C}{\lambda}$, in
the optimal solution we will delete at most $\lambda$ items, whose
total value is at most $\frac{1}{2}$. Therefore, the total value of
the remaining items will be at least $\frac{1}{2}$. Thus, it
suffices to prove that \sampleksu($\lambda$) achieves a constant
approximation with respect to the optimal solution on the setting
where we have already removed all items of large size. In
conclusion, we may restrict ourself to the case where each item has
a small value $v_j<\frac{1}{2\lambda}$ and small size $c_j\le
\frac{C}{\lambda}$.

Let $${\rm OPT}^s = \left \{j ~\Big|~ j\in {\rm OPT}, c_j \leq
\frac{C}{\lambda}\right \}$$ and $${\rm OPT}^* = \left \{j ~\Big|~
j\in {\rm OPT}^s, \ \frac{v_j}{c_j} \geq \frac{v({\rm
OPT}^s)}{2C}\right \}.$$ By the above argument and
Claim~\ref{lem:half}, we know that $v({\rm OPT}^s) \geq \frac{1}{2}$
and   $v({\rm OPT}^*) \geq \frac{v({\rm OPT}^s)}{2} \geq
\frac{1}{4}$.

In the algorithm, we put each agent either in group $T$ or $R$
independently at random with probability $\frac{1}{2}$. Note that
every item in ${\rm OPT}^s$ belongs to a different agent.
Since all items have a small value, we can apply Lemma
\ref{lem:probability} in Appendix~\ref{appendix-main-mechanism-approx}, which tells us that with a high probability,
$T$ will get ``roughly'' half of the total value of ${\rm OPT}^s$ and
$R$ will get ``roughly'' half of the value of ${\rm OPT}^*$. By
choosing a sufficiently large $\lambda$ (according to Lemma
\ref{lem:probability}, $\lambda\ge 2\cdot 36$), with
probability at least $0.75$, we get $v(T\cap {\rm OPT}^s) \geq
\frac{1}{3}v({\rm OPT}^s)\geq \frac{1}{6}$. Similarly, with
probability at least $0.75$, for sufficiently large $\lambda$
($\ge 4\cdot 36$) we get $v(R\cap {\rm OPT}^*)
\geq \frac{v({\rm OPT}^*)}{3} \geq \frac{1}{12}$. Thus with
probability at least $0.5$, we have
$$ v(T\cap {\rm OPT}^s) \geq \frac{v({\rm OPT}^s)}{3} \geq \frac{1}{6} \ \ \  \mbox{ and }\ \ \  v(R\cap {\rm OPT}^*) \geq \frac{v({\rm OPT}^*)}{3} \geq \frac{1}{12}.$$
If the last two inequalities both hold, then the value $V$ in
\sampleksu($\lambda$) is at least $\frac{1}{12}$, as all items have
a small size and therefore the greedy algorithm provides an
approximation ratio smaller than $2$.

There are two possibilities regarding the process of the mechanism.
\begin{itemize}
\item \sampleksu($\lambda$) at some step has taken nothing from an agent $i$, but $i$ has an
item with $\frac{v_j}{c_j}\geq \frac{V}{2C}$. In this case $S$
should occupy at least $(1-\frac{1}{\lambda})C$ capacity of the
knapsack, since the size of the item is less than or equal to
$\frac{C}{\lambda}$. Since all chosen items have
$\frac{v_j}{c_j}\geq \frac{V}{2C}$, the total value of the mechanism
is at least $(1-\frac{1}{\lambda}) \frac{V}{2}\ge \frac{1}{48}$
(given that $\lambda\geq 2$).

\item \sampleksu($\lambda$) has taken an item from every agent that has at
least one item with $\frac{v_j}{c_j}\geq \frac{V}{2C}$. Note that
every item in $R\cap {\rm OPT}^*$ satisfies this condition. So for
each item in $R\cap {\rm OPT}^*$, either this or another one but
with larger value is chosen from the same agent. Therefore, the
total value of items chosen by the mechanism is at least $v(R\cap
{\rm OPT}^*)\ge \frac{1}{12}$.
\end{itemize}

For both cases, the expected value of the mechanism is at least a
constant, which provides a constant approximation to the optimal
value 1. This completes the proof.
\end{proof}

\subsection{Mechanism for \knapsack-\mul}

In \knapsack-\mul, we may choose several items from an agent. The
most natural idea one may consider here is to extend the mechanism for
\knapsack-\unit\ to the multi-demand setting. However, there are
several difficulties. First, in the \sampleksu\ algorithm for
\knapsack-\unit, in order to use the probabilistic lemma, we need to
bound the maximum value of a single item. This is sufficient for
\knapsack-\unit\ since we choose at most one item from an agent. For
\knapsack-\mul, however, we need to bound the total contribution of
a single agent in order to use the lemma. Therefore, we need to
employ one more mechanism to handle such single agent. But how can
we choose the agent? This raises the second difficulty: The
computation of the optimal solution for a single agent is NP-hard
(it is equivalent to solving the knapsack problem). On the other
hand, if we use the greedy algorithm, the mechanism may not be
truthful.

To overcome these difficulties, we use the fractional optimal
solution which can be computed by the greedy algorithm plus a
fraction of the first item that the greedy does not pick so that the
aggregate capacity is equal to the full capacity. We first ignore
all items with capacity larger than $\frac{C}{2}$. Then for each agent we compute the fractional optimal
solution with total capacity smaller than or equal to $\frac{C}{2}$ and choose the
best solution over all agents. There is one nice property of the fractional
optimal solution provided by the greedy rule: there is at most one
item which is fractionally picked. We will include this item with
a probability equal to its share in the fractional optimal
solution. The mechanism is given below.

\begin{center}
\small{}\tt{} \fbox{
\parbox{5.8in}{\hspace{0.05in} \\
[-0.05in] \underline{\lagentksm}
\begin{itemize}
\item Remove all items with capacity $c_j>\frac{C}{2}$.
\item For each agent $i$, compute the fractional optimal solution $F_i$ with total capacity at most
$\frac{C}{2}$.
\item Choose an agent $i^*$ with the largest value of $F_i$.
\item Assume the fractional solution $F_{i^*}$ is composed of
      \begin{itemize}
      \item a subset of items $X$ that are fully picked,
      \item an item $j$ that is picked with a share $0< \alpha<1$.
      \end{itemize}
\item Output
      \begin{itemize}
      \item $X$ at probability $1-\alpha$;
      \item $X\cup \{j\}$ at probability $\alpha$.
      \end{itemize}
\end{itemize}
} }
\end{center}

We note that since we reserve half of the total capacity and the
size of each item is not more than $\frac{C}{2}$, including the
last item $j$ still gives a feasible solution. In addition, by
making the probability equal to the share of the item $j$,
the agent's expected utility is the same as his fractional optimal value.
This ensures the truthfulness (in expectation) for the agent.

\begin{claim}\label{lem:lagentksm-truthful}
\lagentksm\ is a truthful in expectation mechanism.
\end{claim}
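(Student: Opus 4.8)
The plan is to reduce truthfulness in expectation to two clean facts: that the fractional optimal value of an agent depends only on his own reported set and is monotone under hiding items, and that the expected utility the mechanism delivers to the selected agent equals exactly that fractional value. Write $\phi(B)$ for the fractional optimal value of a reported item set $B$ under budget $\frac{C}{2}$ (computed by the greedy rule), so that $v(F_i)=\phi(B_i)$ when agent $i$ reports $B_i$. First I would note that $\phi(B_i)$ is computed from $B_i$ together with the public capacities and budget alone, with no dependence on the other agents' bids. Since any fractional solution feasible for $B_i\subseteq A_i$ is also feasible for $A_i$, the value is monotone: $\phi(B_i)\le \phi(A_i)$ for every $B_i\subseteq A_i$. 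Next, the output decomposes $F_{i^*}$ into a fully picked set $X$ and a single fractional item $j$ with share $\alpha$, and returns $X$ with probability $1-\alpha$ and $X\cup\{j\}$ with probability $\alpha$; hence the winner's expected utility is
\[
(1-\alpha)\,v(X)+\alpha\bigl(v(X)+v_j\bigr)=v(X)+\alpha v_j=v(F_{i^*})=\phi(B_{i^*}).
\]
Because each surviving item has capacity at most $\frac{C}{2}$ and the fractional solution uses total capacity at most $\frac{C}{2}$, appending $j$ keeps the output within $C$, so both realized outcomes are genuinely feasible and the identity above is legitimate.

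With these facts in place, the truthfulness argument is short. Fix an agent $i$ and arbitrary bids of the others, and let $M$ denote the largest value $\phi(\cdot)$ among the remaining agents; note $M$ is unaffected by $i$'s report. Every non-winning agent receives utility $0$, while the winner receives expected utility equal to his own $\phi$. If under truthful reporting $i$ is the winner, his expected utility is $\phi(A_i)$; any deviation to $B_i\subseteq A_i$ gives $\phi(B_i)\le\phi(A_i)$, so he either remains the winner with weakly smaller expected utility or becomes a loser with utility $0$, and in neither case does he gain. If under truthful reporting $i$ is not the winner, his utility is $0$ and (under the fixed tie-break) $\phi(A_i)$ does not exceed $M$; since deviation only lowers his value to $\phi(B_i)\le\phi(A_i)\le M$, he still cannot overtake $M$ and his utility stays $0$. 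Thus truthful reporting weakly dominates in expectation, which is precisely the claim for \lagentksm.

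I expect the only delicate points to be the two inputs to this argument rather than the case analysis itself. The first is the exact expectation identity: it relies on the greedy fractional knapsack solution having \emph{at most one} fractional item, so that a single Bernoulli coin realizes the fractional value in expectation while always outputting an integral feasible set; this is where feasibility (items of capacity at most $\frac{C}{2}$ plus the reserved half-capacity) must be checked. The second is confirming that a deviating agent can never flip the winner selection in his favor, which is exactly where monotonicity of $\phi$ is needed, together with the observation that the winner's payoff is the maximum fractional value and does not depend on the realized $(X,j,\alpha)$ decomposition. The subtlest issue is the tie-breaking convention among agents with equal fractional value: it must be fixed and independent of the reported sets, so that an agent reducing his own $\phi$ can never promote himself from loser to winner.
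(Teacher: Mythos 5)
Your proof is correct and follows essentially the same route as the paper's: monotonicity of the fractional optimum under reporting a smaller set, plus the observation that the winner's expected utility equals exactly his fractional optimal value $\phi(B_{i^*})$, so the winner cannot gain by lying and no loser can overtake the maximum. The paper compresses this into three sentences; your write-up simply makes explicit the supporting details (the expectation identity $(1-\alpha)v(X)+\alpha(v(X)+v_j)=v(F_{i^*})$, the feasibility of appending the fractional item given the reserved half-capacity, and the fixed tie-break), all consistent with the paper's mechanism and argument.
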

\begin{proof}
No agent can enlarge his fractional optimal solution by reporting
smaller set. The agent $i^*$ does not have any incentive to lie
since his expected revenue is exactly his value for $F_{i^*}$ and
she may not increase this value by lying. Any other agent cannot get
anything, since she only is allowed to decrease the size of reported
set. Therefore, mechanism is  truthful-in-expectation.
\end{proof}

There is a similar difficulty in extending \sampleksu\ to the
multi-unit demand setting: The greedy algorithm may not be truthful
for the agents in $R$. Again the similar idea of fractional
optimal solution helps us to get round this difficulty.


\begin{center}
\small{}\tt{} \fbox{
\parbox{6.2in}{\hspace{0.05in} \\
[-0.05in] \underline{\sampleksm($\lambda$)}
\begin{enumerate}
\item Delete all items with capacity $c_j>\frac{C}{\lambda}$.
\item Pick each agent with probability $\frac{1}{2}$ into group $T$; let $R$ be the remaining agents.
\item Run the greedy algorithm for $T$; let $V$ be the total value of the solution.
\item For each agent in $R$, remove all his items with $\frac{v_j}{c_j}<\frac{V}{2C}$.
\item Let $S\leftarrow\emptyset$ and $\widetilde{C}=C-\frac{C}{\lambda}.$
\item For each agent $i\in R$ on a fixed order
  \begin{enumerate}
  \item Compute the fractional optimal solution $F_i$ for the items of agent $i$ within the capacity $\widetilde{C}-c(S).$
  \item Let $F_{i}=X\cup \{j\}$ where items in $X$ are fully picked and $j$ is only picked\\ with a share $0< \alpha<1$.
  \item With probability $\alpha$, let $S\leftarrow S\cup X\cup \{j\}$; with probability $1-\alpha$, let $S\leftarrow
        S\cup X$;\\ if $\alpha>0$ stop.
  \end{enumerate}
\item Output $S$.
\end{enumerate}
} }
\end{center}

Notice that the final outcome $S$ with certain probability may
exceed the capacity $\widetilde{C}$, but it is always within the
capacity $C$ (this follows from the fact that we set
$\widetilde{C}=C-\frac{C}{\lambda}$ at the beginning).

Our mechanism for \knapsack-\mul\ is as follows.

\begin{center}
\small{}\tt{} \fbox{
\parbox{5.2in}{\hspace{0.05in} \\
[-0.05in] \underline{\mechanismksm}
\begin{enumerate}
\item With probability $\frac{1}{3}$, choose a single item with the largest value.
\item With probability $\frac{1}{3}$, run \lagentksm.
\item With probability $\frac{1}{3}$, run \sampleksm($\lambda$).
\end{enumerate}
} }
\end{center}

\begin{theorem}
\mechanismksm\ is truthful in expectation and has a constant approximation ratio for \knapsack-\mul.
\end{theorem}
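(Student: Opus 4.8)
The mechanism \mechanismksm\ is a uniform mixture of three sub-mechanisms, each of which I would first argue is individually truthful (in expectation) so that the mixture is truthful in expectation. The single-largest-value item step is trivially truthful, \lagentksm\ is truthful in expectation by Claim~\ref{lem:lagentksm-truthful}, and for \sampleksm($\lambda$) I would argue that agents in $T$ have no incentive to misreport since none of their items are ever output, while each agent $i\in R$ is treated exactly as in the single-agent fractional-optimum construction: the mechanism outputs his fractional optimal solution (within the residual capacity $\widetilde C-c(S)$) as an expected value, and since shrinking one's reported set can only shrink the fractional optimum, reporting truthfully is a dominant strategy in expectation. Hence \mechanismksm\ is truthful in expectation.

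\textbf{Approximation.} For the approximation ratio I would mirror the \knapsack-\unit\ analysis. Normalize $v(\mathrm{OPT})=1$. If some single item has value at least a constant fraction (say $\frac{1}{2\lambda}$) then step~1 already yields a constant approximation, so assume every item has small value. The new wrinkle relative to \knapsack-\unit\ is that a single agent may contribute a large total value across many items; this is exactly why step~2 (\lagentksm) is needed. I would split into two cases: if some agent's optimal contribution is at least a constant fraction of $v(\mathrm{OPT})$, then \lagentksm\ captures a constant fraction of it (its expected output equals the best agent's fractional optimum, which dominates the integral optimum). Otherwise, every agent contributes only a small fraction, which lets me invoke the concentration lemma (Lemma~\ref{lem:probability}) with the per-agent contribution as the bounded random variable rather than the per-item value.

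\textbf{The core estimate.} Define $\mathrm{OPT}^s$ (items of small capacity $\le C/\lambda$, surviving step~1) and $\mathrm{OPT}^*$ (those in $\mathrm{OPT}^s$ also passing the ratio threshold). As in Claim~\ref{lem:half}, deleting large-capacity items costs at most a constant fraction, and the ratio-threshold filtering costs at most half, so $v(\mathrm{OPT}^*)$ is a constant fraction of $v(\mathrm{OPT})$. Using Lemma~\ref{lem:probability} on the random split into $T$ and $R$ (valid because each agent's contribution is bounded by the small-contribution assumption), with constant probability $T$ captures roughly half of $v(\mathrm{OPT}^s)$, giving $V$ a constant lower bound, and simultaneously $R$ captures roughly half of $v(\mathrm{OPT}^*)$. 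I then argue, exactly as in the \knapsack-\unit\ proof, by the online greedy phase of step~6: either at some point the running set $S$ already fills a $(1-\tfrac1\lambda)$ fraction of the knapsack with items of ratio at least $\frac{V}{2C}$—yielding total value $\ge(1-\tfrac1\lambda)\frac{V}{2}$, a constant—or else every agent in $R\cap\mathrm{OPT}^*$ contributes at least as much value as its optimal items, yielding at least $v(R\cap\mathrm{OPT}^*)$, again a constant.

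\textbf{Main obstacle.} The hard part is the truthfulness and correctness of the online fractional step~6 of \sampleksm($\lambda$): because the residual capacity $\widetilde C-c(S)$ depends on \emph{previously} processed agents in $R$ but not on the current agent's own report, I must check that an agent's expected payoff equals his fractional optimum within a capacity determined independently of his report, so that he still cannot gain by hiding items; and I must verify that the one-shot "stop after the first fractional agent" rule, together with the reserved slack $\widetilde C=C-\frac C\lambda$, both keeps $S$ feasible within $C$ and does not destroy the constant-fraction value guarantee derived from the concentration bound. Reconciling the offline optimum bound with this online, capacity-coupled process is where the argument requires the most care.
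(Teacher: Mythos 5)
Your proposal is correct and follows essentially the same route as the paper: the same three-way case split (a single item of large value, a single agent contributing a large fraction of the optimum, or all agents contributing little), with the largest-value step, {\sc ks-mul-large-agent}, and {\sc ks-mul-sample} respectively covering the three cases, and the same concentration-lemma-plus-online-dichotomy estimate that the paper itself only sketches by reference to the \knapsack-\unit\ analysis. One small imprecision: the expected output of {\sc ks-mul-large-agent} equals the best agent's fractional optimum \emph{within capacity $C/2$ after deleting items of capacity exceeding $C/2$}, so it does not literally dominate that agent's integral contribution to the optimum; it does so only up to a factor of $2$ plus the value of at most one deleted item, which is precisely the paper's explicit accounting (a contribution of at least $\frac{1}{2\lambda}-\frac{1}{4\lambda}=\frac{1}{4\lambda}$ surviving the deletion, hence at least $\frac{1}{8\lambda}$ returned) and is harmless under your case ordering, since large-value items were already disposed of in the first case.
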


\begin{proof}
The truthfulness for the first part where we choose a single item
with the largest value is straightforward. Truthfulness for
\lagentksm\ has been shown in Claim~\ref{lem:lagentksm-truthful}.
The similar argument works for \sampleksm($\lambda$) as well.

By scaling all values simultaneously, without loss of
generality we may assume that the optimal value is $1$. If there exists a single
item with value at least $\frac{1}{4\lambda}$, then we are done,
since with probability at least $\frac{1}{3}$ we will pick either
this item or better in the output.

In the following we assume that the value of any single item is at
most $\frac{1}{4\lambda}$. Let OPT be the optimal solution and
$v(OPT)$ be its value. Consider the possible contribution of a
single agent to OPT. If there exists an agent who contributes at
least $\frac{1}{2\lambda}$ to OPT, then after removing of all his items with
capacity larger than $\frac{C}{2}$, at most one of his items from OPT has been removed.
Therefore, his contribution to OPT will be still at least
$\frac{1}{2\lambda}-\frac{1}{4\lambda}=\frac{1}{4\lambda}$. Then the
value returned by \lagentksm\ is at least $\frac{1}{8\lambda}$. We
are also done.

It remains to consider the case where the contribution of any single
agent to OPT is at most $\frac{1}{2\lambda}$. We next will prove
that \sampleksm($\lambda$) contributes a constant value. By removing
all the items with capacity larger than $\frac{C}{\lambda}$, we could
remove at most $\lambda$ items from OPT and the value of remaining
items in OPT is at least $1-\lambda\cdot \frac{1}{4\lambda}=\frac{3}{4}$.
Hence, it is sufficient to prove that \sampleksm($\lambda$) has a
constant approximation ratio providing that every item has size less
than or equal to $\frac{C}{\lambda}$ and every agent contributes
less than $\frac{1}{2\lambda}$ to the optimal solution.

The rest of the proof is similar to that of Theorem~\ref{theorem-ks-unit}, and thus, is omitted here. 
We conclude that \mechanismksm\ has a constant approximation. 
\end{proof}

\section{Generalized Assignment Problem}

\subsection{Proof of Theorem~\ref{theorem-gap-4-special-cases}}

\begin{proof}
Since \smgreedy\ is truthful, it suffices to show that \multigs\ is
truthful as well. For all four invariant settings, the process of
the \multigs\ has the following nice property: Once $i$ and $k$ are
matched, the assignment will never be broken. We will prove this
only for the job value invariant setting; the arguments for the job
capacity invariant and machine capacity invariant settings are
similar. We consider the first pair picked by the algorithm, $i$ and
$k$. According to the algorithm, the ratio given by the first
assigned pair, $\frac{v_{ik}}{c_{ik}}$, is the largest possible
among all ratios. Since every job has the same value $v_i$ for all
machines, all jobs are essentially indifferent between all machines.
Thus machine $k$ has the highest rank in $\calL_i$. Therefore, after
job $i$ is assigned to machine $k$, the assignment will never be
broken. We can then apply the same argument iteratively for all
assigned pairs. (The argument for the machine value invariant
setting is slightly different: Since all jobs have the same value
for every machine, all of them have the same preference over
machines. Hence, the algorithm runs in such a way that all jobs try
to fill the machine that they prefer most, according to the
increasing order of their capacities, then move on to the next
machine, and so on.) Given the above property, we know that the
assignment returned by the algorithm is stable, and the algorithm
runs in a greedy manner. Therefore, it is truthful for all jobs.

Next we will prove that the approximation ratio of the mechanism is
4. We denote by $\calA^{(1)}$, $\calA^{(2)}$ the assignments we get
from \smgreedy\ and \multigs, respectively, and by $\calA^{opt}$ the
optimal assignment.

The expected value of the mechanism is given by
$\frac{1}{2}v(\calA^{(1)})+\frac{1}{2}v(\calA^{(2)})$. We want to
show that
\[4\cdot \left(\frac{1}{2}v\big(\calA^{(1)}\big)+\frac{1}{2}v\big(\calA^{(2)}\big)\right) = 2v\big(\calA^{(1)}\big)+2v\big(\calA^{(2)}\big) \ge v\big(\calA^{opt}\big).\]

We denote by $S$ the set of jobs $i$ for which
$v\big(\calA_{i}^{opt}\big)> v\big(\calA_{i}^{(1)}\big)$ and
$v\big(\calA_{i}^{opt}\big)> v\big(\calA_{i}^{(2)}\big)$. Then, we
get
\[
\sum\limits_i v\big(\calA_{i}^{(1)}\big)+\sum\limits_i v\big(\calA_{i}^{(2)}\big)\ge \sum\limits_{i\notin S} v\big(\calA_{i}^{opt}\big).
\]
Further, we have $v(\calA^{opt})=\sum\limits_{i\notin S} v\big(\calA_{i}^{opt}\big)+\sum\limits_{i\in S} v\big(\calA_{i}^{opt}\big) = \sum\limits_{i\notin S} v\big(\calA_{i}^{opt}\big)+\sum\limits_{k}v\left(\calA_{k}^{opt}\cap S\right).$

Now consider any job $i_0\in S$; let $\calA_{i_0}^{opt}=k$. By the
definition of $S$, we know that in both assignments $\calA^{(1)}$
and $\calA^{(2)}$, job $i_0$ prefers machine $k$ to its actual
assignments. Therefore, for $\calA^{(1)}$, by the greedy rule, we
have $v\big(\calA_{k}^{(1)}\big)\ge v_{ik}=
v\big(\calA_{i}^{opt}\big)$ for any $i\in \calA_{k}^{opt}\cap S$.

For $\calA^{(2)}$, by the property of stability, for any $i\in
\calA_{k}^{opt}\cap S$, we have (i) $v\big(\calA_{i}^{opt}\big)\le
c\big(\calA_{i}^{opt}\big)\frac{v\big(\calA_{i'}^{(2)}\big)}{c\big(\calA_{i'}^{(2)}\big)}$
for each $i'\in\calA_{k}^{(2)}$, and (ii)
$C_k<c\big(\calA_{k}^{(2)}\big)+c\big(\calA_{i}^{opt}\big)$, since
we cannot add $i$ to $\calA_{k}^{(2)}.$ Then we get
\[
v\left(\calA_{i}^{opt}\right)\le c\left(\calA_{i}^{opt}\right)\frac{\sum\limits_{i\in\calA_{k}^{(2)}}v\left(\calA_{i}^{(2)}\right)}
{\sum\limits_{i\in\calA_{k}^{(2)}}c\left(\calA_{i}^{(2)}\right)}=c\left(\calA_{i}^{opt}\right)\frac{v\left(\calA_{k}^{(2)}\right)}{c\left(\calA_{k}^{(2)}\right)}.
\]
For the considered job $i_0\in\calA_{k}^{opt}\cap S$, we can write
\[
v\left(\big(\calA_{k}^{opt}\cap S\big)\setminus\{i_0\}\right)=\sum\limits_{\substack{i\in\calA_{k}^{opt}\cap S\\i\neq i_0}}v\left(\calA_{i}^{opt}\right)\le
c\left(\big(\calA_{k}^{opt}\cap S\big)\setminus\{i_0\}\right)\frac{v\left(\calA_{k}^{(2)}\right)}{c\left(\calA_{k}^{(2)}\right)}\le v\left(\calA_{k}^{(2)}\right),
\]
where the last inequality follows from the fact
$c\left(\big(\calA_{k}^{opt}\cap S\big)\setminus\{i_0\}\right)\le
C_k-c\big(\calA_{i_0}^{opt}\big)<c\big(\calA_{k}^{(2)}\big)$. Then
we have
$v\left(\calA_{k}^{(1)}\right)+v\left(\calA_{k}^{(2)}\right)\ge
v\left(\big(\calA_{k}^{opt}\cap
S\big)\setminus\{i_0\}\right)+v\big(\calA_{i_0}^{opt}\big)=
v\big(\calA_{k}^{opt}\cap S\big)$. Hence, $\sum\limits_k
v\left(\calA_{k}^{(1)}\right)+\sum\limits_k
v\left(\calA_{k}^{(2)}\right)\ge\sum\limits_{k}v\big(\calA_{k}^{opt}\cap
S\big)$.

Therefore, we have
\begin{eqnarray*}
2v\big(\calA^{(1)}\big)+2v\big(\calA^{(2)}\big)&=& \sum\limits_i v\left(\calA_{i}^{(1)}\right)+\sum\limits_i v\left(\calA_{i}^{(2)}\right)+\sum\limits_k v\left(\calA_{k}^{(1)}\right)+\sum\limits_k v\left(\calA_{k}^{(2)}\right)\\
&\ge& \sum\limits_{i\notin S} v\left(\calA_{i}^{opt}\right)+\sum\limits_{k}v\left(\calA_{k}^{opt}\cap S\right)\\
&=&v(\calA^{opt}).
\end{eqnarray*}
This completes the proof.
\end{proof}

\subsection{Example for \multigs}\label{example-sm-alg-not-truthful}

The following example shows that the \multigs\ is not necessarily
truthful for the general \gap.

\begin{example}
There are four jobs $\{1,2,3,4\}$ and three machines $\{x,y,z\}$.
The values, capacities, and preferences of compatible pairs are
given by the following table:

\begin{table*}[h]
\small
\begin{center}
\begin{tabular}{|c|c|c|c|c|c|c|c|c|c|c|c|c|c|}\hline
 &  & \multicolumn{3}{|c|}{Job 1} & \multicolumn{3}{|c|}{Job 2} & \multicolumn{3}{|c|}{Job 3} & \multicolumn{3}{|c|}{Job 4}  \\
 \cline{2-14}
 & Values & \multicolumn{3}{|c|}{$v_{1x}=1, v_{1y}=0.5$} & \multicolumn{3}{|c|}{$v_{2x}=1, v_{2z}=0.5$} & \multicolumn{3}{|c|}{$v_{3x}=10, v_{3z}=20$} & \multicolumn{3}{|c|}{$v_{4x}=5, v_{4y}=0.1$}  \\
 \cline{2-14}
\raisebox{1.3ex}[0pt]{Jobs} & Capacities & \multicolumn{3}{|c|}{$c_{1x}=0.5, c_{1y}=1$} & \multicolumn{3}{|c|}{$c_{2x}=0.5, c_{2z}=1$} & \multicolumn{3}{|c|}{$c_{3x}=1, c_{3z}=100$} & \multicolumn{3}{|c|}{$c_{4x}=1, c_{4y}=1$} \\
 \cline{2-14}
 & Preferences & \multicolumn{3}{|c|}{$\calL_1:x > y$} & \multicolumn{3}{|c|}{$\calL_2:x > z$} & \multicolumn{3}{|c|}{$\calL_3:z > x$} & \multicolumn{3}{|c|}{$\calL_4:x > y$} \\ \hline \hline
 & & \multicolumn{4}{|c|}{Machine $x$} & \multicolumn{4}{|c|}{Machine $y$} & \multicolumn{4}{|c|}{Machine $z$}  \\
 \cline{2-14}
Machines & Capacities & \multicolumn{4}{|c|}{$C_x = 1$} & \multicolumn{4}{|c|}{$C_y = 1$} & \multicolumn{4}{|c|}{$C_z = 100$}  \\
 \cline{2-14}
 & Preferences & \multicolumn{4}{|c|}{$\calL_x: \{3\} > \{4\} > \{1=2\}$} & \multicolumn{4}{|c|}{\hspace{0.8cm} $\calL_y: \{1\} > \{4\} \hspace{0.8cm}$} & \multicolumn{4}{|c|}{$\calL_z: \{2\} > \{3\}$}  \\
\hline
\end{tabular}
\end{center}
\end{table*}
\normalsize \noindent If all agents report their true
information, jobs 1 and 2 propose to $x$, and job 3 proposes to $z$;
then job 4 proposes to $x$, which kicks jobs 1 and 2 off to $y$ and
$z$, respectively. Job 3 is kicked off by 2 then and next proposes
to $x$, which kicks job 4 off. Hence, job 4 does not get any
assignment eventually. However, if job 4 hides $x$ and only reports
machine $y$, then he will get $y$ by the stable matching algorithm.
Therefore, the stable matching algorithm is not truthful for the
general \gap.
\end{example}


\subsection{Analysis of \lsizem($\lambda$), \separationm($\lambda$), \Gapm($\lambda,\mu$)}

\begin{claim}\label{cl:large_approx}
\lsizem($\lambda$) is a truthful mechanism. Further, for any
instance where all pairs $(i,k)$ satisfy
$c_{ik}\ge\frac{1}{\lambda}C_k$, its approximation ratio is at most
$2\lambda$.
\end{claim}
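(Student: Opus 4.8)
The plan is to handle the two assertions separately, treating truthfulness first and then the approximation bound.

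For truthfulness, I would observe that the deletion step of \lsizem($\lambda$) depends only on the public quantities $c_{ik}$ and $C_k$, so the set of surviving pairs incident to a job $i$ is simply $i$'s reported set intersected with a fixed public filter. Reporting a pair that the filter removes has no effect, and reporting a pair that survives is exactly a report handed to the subsequent \smgreedy\ run (restricted to one job per machine). Since with the one-job-per-machine restriction \smgreedy\ processes surviving pairs in decreasing order of $v_{ik}$ and assigns $(i,k)$ whenever both $i$ and $k$ are still free, it coincides with greedy bipartite matching by edge weight, which the paper has already noted is truthful for every job (each job is assigned at most once). Thus hiding any compatible pair can only shrink job $i$'s options and never improve its assignment, giving truthfulness.

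For the approximation bound, I would first note that when every pair satisfies $c_{ik}\ge\frac{1}{\lambda}C_k$ the deletion step removes nothing, so \lsizem($\lambda$) outputs exactly the greedy one-to-one matching $\calA$. The argument then proceeds in two steps. First, I would compare $\calA$ to a maximum-weight one-to-one matching $\calM^{*}$ via the standard charging argument for greedy matching: each edge $(i,k)\in\calM^{*}$ that is not chosen by the greedy was blocked because $i$ or $k$ had already been matched by an edge processed earlier, hence of weight at least $v_{ik}$; charging each such edge to its blocker and noting that each greedy edge receives at most two charges yields $v(\calM^{*})\le 2\,v(\calA)$. Second, I would bound the full \gap\ optimum $\calA^{opt}$ by $\calM^{*}$: since every job consumes capacity at least $\frac{1}{\lambda}C_k$, each machine holds at most $\lambda$ jobs in $\calA^{opt}$, so keeping only the highest-valued job on each machine produces a legal one-to-one matching of value at least $\frac{1}{\lambda}v(\calA^{opt})$, whence $v(\calM^{*})\ge\frac{1}{\lambda}v(\calA^{opt})$.

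Chaining the two inequalities gives $v(\calA)\ge\frac{1}{2}v(\calM^{*})\ge\frac{1}{2\lambda}v(\calA^{opt})$, i.e.\ an approximation ratio of at most $2\lambda$, as required. I expect the only delicate point to be confirming that the one-job-per-machine \smgreedy\ genuinely behaves as greedy matching, so that both the truthfulness citation and the charging argument apply; because proposals are taken in global decreasing value order and a full machine never receives a later, lower-value proposal, no deferred-acceptance swaps occur and the identification is exact. The remaining steps are routine.
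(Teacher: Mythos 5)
Your proof is correct and takes essentially the same approach as the paper: both truthfulness arguments rest on the deletion filter being public followed by the truthfulness of one-job-per-machine greedy, and both approximation arguments combine the observation that the \gap\ optimum places at most $\lambda$ jobs per machine (so its top job per machine captures a $\frac{1}{\lambda}$ fraction) with a factor-$2$ charging argument for the greedy matching. The only cosmetic difference is that you factor the charging through the maximum-weight matching $\calM^*$, whereas the paper charges the top-job-per-machine matching extracted from $\calA^{opt}$ directly against $2v(\calA)$ (job side plus machine side); the two inequality chains are identical in substance.
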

\begin{proof}
For truthfulness, reporting pairs with $c_{ik}<\frac{1}{\lambda}C_k$
will not change anything in the mechanism. Therefore, we may
restrict ourselves to the setting only with pairs
$c_{ik}\ge\frac{1}{\lambda}C_k$. Then the greedy strategy to pick
job-machine pairs implies truthfulness for all jobs.

Consider the optimal assignment $\calA^{opt}$ of jobs to machines.
Note that for each machine $k$ we have $\big|\calA_{k}^{opt}\big|\le
\lambda$ (we recall that $\calA_{k}^{opt}$ is the set of jobs that
machine $k$ is assigned to). Let $\calA$ be the assignment generated
by our mechanism. Let us consider a pair $(i,k)$ in
$\calA_{k}^{opt}$ with the highest value $v_{ik}$. Let
$v\left(\calA_i\right)$ and $v(\calA^{opt}_i)$ be the values of the
assignments of job $i$ in $\calA$ and $\calA^{opt}$, respectively.
In the assignment $\calA$, there are two possibilities for job $i$:
\begin{itemize}
\item either $v(\calA_i)\ge v(\calA^{opt}_i)$
\item or $v(\calA_i)< v(\calA^{opt}_i)$. For the latter case, by the greedy rule, machine $k$ should be
assigned in $\calA$ to a job with a value greater than or equal to
$v_{ik}$.
\end{itemize}
Hence, we have
\[
2v(\calA)=\sum_{k}\sum_{i\in \calA_k}v_{ik}+\sum_{k}v(\calA_k)\ge\sum_{k}\max_{i\in \calA_k^{opt}}v_{ik}\ge\frac{1}{\lambda}v(\calA^{opt}).
\]
Therefore, $2\lambda\cdot v(\calA)\ge v(\calA^{opt})$ and the claim follows.
\end{proof}

\begin{claim}\label{cl:mechanism-2}
\separationm($\lambda$) is a truthful mechanism.
\end{claim}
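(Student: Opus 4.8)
The plan is to follow the same two-part template used for \lsizem\ in Claim~\ref{cl:large_approx}: first show that the pairs discarded in the opening step are strategically irrelevant, and then argue that on the surviving instance the procedure is exactly a deferred acceptance algorithm with the jobs proposing, whose proposing-side strategyproofness is classical and already invoked in the Introduction.

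First I would note that a job gains nothing by reporting a pair $(i,k)$ with $c_{ik}>\frac{1}{\lambda}C_k$, since the first step of \separationm($\lambda$) deletes every such pair before \smgreedy\ is run. Hence we may restrict attention to reports consisting only of pairs with $c_{ik}\le\frac{1}{\lambda}C_k$, and in this restricted setting the only meaningful deviation available to a job is to hide some of its (small-capacity) compatible machines, i.e.\ to truncate its preference list $\calL_i$.

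The crux — and the step I expect to be the main obstacle — is the structural observation that on this restricted instance the knapsack capacity constraint is never binding and is completely subsumed by the cardinality bound. Indeed, every surviving pair has $c_{ik}\le\frac{1}{\lambda}C_k$, so any set of at most $\lambda$ jobs assigned to machine $k$ has aggregate capacity at most $\lambda\cdot\frac{1}{\lambda}C_k=C_k$ and is automatically feasible. Consequently the acceptance rule of \smgreedy\ reduces to ``hold the top $\lambda$ proposers ranked by value $v_{ik}$'', which is precisely a responsive preference with quota $\lambda$. The whole mechanism therefore becomes the deferred acceptance algorithm for a many-to-one matching market in which each job is unit demand with preferences over machines by $v_{ik}$, each machine has quota $\lambda$ and responsive preferences by $v_{ik}$, and the unit-demand jobs make the proposals. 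I would verify this equivalence carefully (and check that the fixed tie-breaking rule of the section yields strict preferences, so that the matching is uniquely determined).

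Finally I would invoke the strategyproofness of the proposing side in deferred acceptance for many-to-one markets with responsive preferences (Roth, Theorem~5.16, cited in the Introduction): no agent on the proposing side can improve its assigned partner by misrepresenting its preference list, and in particular truncating that list by hiding compatible machines cannot help. Since reporting a subset $B_i\subseteq A_i$ is exactly such a truncation once the irrelevant large pairs are ignored, reporting $A_i$ truthfully is a dominant strategy for every job, which establishes that \separationm($\lambda$) is truthful.
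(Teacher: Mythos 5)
Your proof is correct, but it justifies the key step by a different route than the paper does. Both arguments open identically: pairs with $c_{ik}>\frac{1}{\lambda}C_k$ are deleted before \smgreedy\ runs, so reporting them is strategically irrelevant and one may restrict attention to the small-capacity instance. From there the paper is more elementary: it simply notes that the per-machine bound of $\lambda$ jobs is independent of the bids, and then invokes the same direct greedy argument used for \smgreedy\ itself --- the pairs are processed in decreasing order of $v_{ik}$, which is exactly the order of the jobs' preferences, and pairs that are rejected never influence the subsequent state of the greedy, so hiding pairs can never unlock a higher-value assignment. That argument needs no structural reduction and works for \emph{any} bid-independent downward-closed constraint on each machine, which is why the paper never bothers to observe (as you do) that the knapsack constraint is subsumed by the cardinality bound. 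Your route instead makes that subsumption the centerpiece: since $\lambda$ jobs of capacity at most $\frac{1}{\lambda}C_k$ always fit, the residual mechanism is a college-admissions market with quota $\lambda$ and responsive preferences, and truthfulness follows from Roth's Theorem~5.16 on the proposing side of many-to-one deferred acceptance. This is a legitimate and arguably more conceptual justification --- it pins the claim on a citable classical theorem and explains structurally why \separationm($\lambda$) escapes the non-truthfulness that plagues \multigs\ on general \gap\ instances --- but it carries one proof obligation your sketch only flags rather than discharges: you must show that \smgreedy\ with the $\lambda$-job restriction actually coincides with the job-proposing DA outcome. This does hold, because proposals arrive in decreasing value order, so a machine's currently held jobs always have value at least that of any new proposer and no displacement ever occurs (equivalently, with preferences on both sides aligned to the pair values, the stable matching is unique and equals the greedy matching); without this observation, citing Theorem~5.16 would be about a different mechanism than the one in the claim. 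Also, ``truncation'' is slightly imprecise --- hiding machines deletes arbitrary entries of $\calL_i$, not a suffix --- but this is harmless since Roth's theorem covers arbitrary misreported lists.
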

\begin{proof}
Observe that reporting pairs with $c_{ik}>\frac{1}{\lambda}C_k$ will
not change anything in the mechanism. Therefore, we may restrict
ourselves to the setting only with pairs
$c_{ik}\le\frac{1}{\lambda}C_k$. Note that the value $\lambda$,
i.e., the number of jobs that each machine can take, is independent
of the bids of all jobs. Thus, the greedy strategy to pick
job-machine pairs implies truthfulness for all jobs.
\end{proof}

\begin{claim}
\Gapm($\lambda,\mu$) is a universally truthful mechanism.
\end{claim}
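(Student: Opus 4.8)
The plan is to show that \Gapm($\lambda,\mu$) is a distribution over deterministic truthful mechanisms, which is precisely what universal truthfulness requires. The only randomness in the mechanism is the sampling in Step~\ref{step:2}, where each job is independently placed in the test set $T$ with probability $\frac{1}{2}$. So I would fix an arbitrary realization of this coin flips --- equivalently, fix a partition of the jobs into $T$ and $R$ --- and argue that the resulting deterministic mechanism is truthful for every job. Since the distribution over these deterministic mechanisms is fixed in advance (the coins are flipped independently of any agent's report), establishing truthfulness realization-by-realization gives universal truthfulness.

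With the partition $(T,R)$ fixed, I would split the analysis by the role a job plays. First, consider a job $i \in T$. The jobs in $T$ are used only to compute the stable assignment $\calA^T$ and hence the thresholds $t_k = \mu \cdot \frac{v(\calA^T_k)}{C_k}$; crucially, the \emph{final} output $\calA$ assigns machines only to jobs in $R$, so no job in $T$ ever receives a machine in the output and thus has utility $0$ regardless of its report. Therefore a job in $T$ has no incentive to misreport --- its utility is identically zero. (One must double-check that a job in $T$ cannot, by misreporting, move itself out of $T$; but membership in $T$ is determined by the independent coin, not by the report, so this cannot happen.) Second, consider a job $i \in R$. Here I would appeal to the online greedy structure of Step~6: the jobs in $R$ are processed in a \emph{fixed order} that does not depend on any job's bid, and when job $i$ is considered, the thresholds $t_k$ and the partial assignment $\calA$ (built from jobs of $R$ earlier in the order) are already determined independently of $i$'s report. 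The mechanism then assigns $i$ to $k = \arg\max_k\{v_{i,k} \mid c_{ik}+c(\calA_k)\le C_k \text{ and } \frac{v_{ik}}{c_{ik}} \ge t_k\}$, i.e., it gives $i$ its most-valued feasible machine among those passing the threshold. Since $i$'s report can only \emph{shrink} the set of compatible pairs it offers (restricting bids to subsets of $A_i$), truthful reporting gives the mechanism the largest feasible set to optimize over, so no subset can yield a machine of higher value $v_{i,k}$ for $i$. Hence truthful reporting is a dominant strategy for every job in $R$.

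The key point to get right --- and what I expect to be the only subtle obstacle --- is the \emph{separation of information}: I must verify that nothing a job $i \in R$ reports can retroactively alter the thresholds $t_k$ or the set $T$, and that the fixed processing order combined with the fact that earlier-processed jobs' assignments are frozen means $i$ faces a clean single-agent optimization with an exogenously fixed feasible region. Because $t_k$ depends only on $\calA^T$ (a function of $T$'s reports, and $i \notin T$), and the capacities $c(\calA_k)$ seen by $i$ depend only on jobs preceding $i$ in the fixed order, $i$'s own report influences \emph{only} which machine $i$ itself grabs. Given this separation, the dominant-strategy argument is the standard ``the mechanism hands you your favorite available option, so revealing all options is optimal'' argument. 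Assembling these pieces: every job is truthful in every fixed realization, so \Gapm($\lambda,\mu$) is universally truthful.
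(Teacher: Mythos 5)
Your proof is correct and follows essentially the same approach as the paper: you fix the realization of the sampling (which is independent of all reports) to reduce to a distribution over deterministic mechanisms, observe that jobs in $T$ receive utility zero regardless of their bids, and argue that each job in $R$ faces a fixed-order greedy assignment with exogenous thresholds $t_k$ and frozen earlier assignments, so reporting all compatible pairs is dominant. The separation-of-information point you emphasize is exactly the parenthetical remark in the paper's own proof, so nothing is missing.
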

\begin{proof}
While \Gapm($\lambda,\mu$) is not deterministic, it does not use any
reported information from jobs to generate the testing group $T$.
Once $T$ is generated, the mechanism performs deterministically.
Therefore, it runs on a distribution over deterministic mechanisms.
Further, notice that every job in the testing group $T$
derives utility zero. Therefore, given the fixed set $T$, none of
the jobs in $T$ can benefit from reporting untruthfully. On the
other hand, for each remaining job $i\in R$, reporting the truth
preference maximizes its utility, since jobs are processed one by
one in a fixed given order and every time we search for the best
possible assignment for the processed job. (Note that the threshold
value $t_k$ is computed according to the outcome of $\calA^T$ from
the testing group $T$; thus it is independent of the bids of any job
in $R$.)
\end{proof}

\begin{claim}\label{cl:stable_3}
The assignment $\calA^T$ returned by the
mechanism \Gapm($\lambda,\mu$) is stable with respect to all jobs in
$T$ and all machines with virtual capacity
$\frac{\lambda-1}{\lambda}C_k$.
\end{claim}
\begin{proof}
Assume the contrary, that $i$ and $k$ form a blocking pair, where
$i$ strictly prefers $k$ to $\calA_i$ and $k$ can get a better
assignment from jobs in $\calA_k\cup\{i\}$ (with respect to the
virtual capacity constraint). Since every job proposes according to
its preference list $\calL_i$, we know that at a certain step job $i$
has proposed to $k$ during the execution of the algorithm. Then
either $i$ got rejected right away, or got assigned but rejected
later (due to proposals from other jobs). For both cases, from that
moment until the end of the algorithm, because the capacity of all
considered pairs is at most $\frac{1}{\lambda}C_k$, we know that (i)
the total capacity of the assigned jobs to $k$ is strictly larger
than the virtual capacity $\frac{\lambda-1}{\lambda}C_k$ (and, of
course, less than or equal to the real capacity $C_k$), and (ii) $k$
prefers every assigned job to $i$. These two facts imply that $k$
cannot get a better assignment from $\calA_k\cup\{i\}$, a
contradiction.
\end{proof}

\subsection{Approximation Analysis of \GapMainm}\label{appendix-main-mechanism-approx}

Let $OPT$ be the optimal allocation. We denote by
$\calA^{(1)}$, $\calA^{(2)}$ and $\calA^{(3)}$ the allocations
obtained in the mechanisms \lsizem$(\lambda)$,
\separationm$(\lambda)$ and \Gapm($\lambda,\mu$), respectively. The
expected value of the mechanism \GapMainm\ is therefore
$\frac{1}{3}\big(v(\calA^{(1)})+v(\calA^{(2)})+v(\calA^{(3)})\big)$.


We divide all pairs into two groups: a large group containing the
pairs $(i,k)$ with $c_{ik}> \frac{C_k}{\lambda}$, and a small group
containing the remaining pairs. Let $OPT^{small}$ and $OPT^{large}$
denote the optimal allocations for the settings restricted on the
jobs only in the small group and large group, respectively. Then we
have
\[
v(OPT)\le v(OPT^{small})+v(OPT^{large}).
\]

By Claim~\ref{cl:large_approx}, we know that the total value derived
from \lsizem$(\lambda)$ satisfies $v(\calA^{(1)})\ge
\frac{v(OPT^{large})}{2\lambda}$. It remains to handle
$OPT^{small}$. Instead of dealing with $OPT^{small}$ directly, we
consider the allocation $\calA^*$ defined according to
Step~\ref{step:2} of \Gapm($\lambda,\mu$) for {\em all} jobs with
respect to the pairs with capacities less than or equal to
$\frac{C_k}{\lambda}$. Formally, $\calA^*$ is defined as follows:
\begin{enumerate}
\item Remove all pairs $(i,k)$ with capacity $c_{ik}>\frac{C_k}{\lambda}$.
\item For each job $i$, let $\calL_i$ be his preference list over machines.
\item For each machine $k$, let $\calL_k$ be its preference list over jobs.
\item Run the \multigs\ with  virtual capacity $\frac{\lambda-1}{\lambda}C_k$ for each machine $k$.
\item Denote the generated assignment by $\calA^*$.
\end{enumerate}

Similar to the proof of Claim~\ref{cl:stable_3}, $\calA^*$ is a stable assignment with respect to virtual capacity $\frac{\lambda-1}{\lambda}C_k$.
The following claim implies that $v(\calA^*)$ is a constant approximation to $v(OPT^{small})$.

\begin{claim}
$(2+\frac{1}{\lambda-1})\cdot v(\calA^*)\ge v(OPT^{small})$.
\end{claim}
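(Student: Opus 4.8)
The plan is to charge the value that $OPT^{small}$ extracts from each job to the stable assignment $\calA^*$, splitting jobs according to whether $\calA^*$ already serves them at least as well. Write $S=\{i : v(OPT^{small}_i) > v(\calA^*_i)\}$ for the set of jobs on which the optimum strictly beats $\calA^*$; equivalently, $S$ is exactly the set of jobs that strictly prefer their $OPT^{small}$-machine to the machine $\calA^*$ gives them. For every $i\notin S$ we have $v(OPT^{small}_i)\le v(\calA^*_i)$, so $\sum_{i\notin S} v(OPT^{small}_i)\le\sum_i v(\calA^*_i)=v(\calA^*)$. It then remains to bound $\sum_{i\in S} v(OPT^{small}_i)$ by $\frac{\lambda}{\lambda-1}\,v(\calA^*)$, after which $v(OPT^{small})\le\bigl(1+\frac{\lambda}{\lambda-1}\bigr)v(\calA^*)=\bigl(2+\frac{1}{\lambda-1}\bigr)v(\calA^*)$, as claimed.

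The heart of the argument is to exploit stability of $\calA^*$ with respect to the virtual capacity $\frac{\lambda-1}{\lambda}C_k$. Fix $i\in S$ and let $k=OPT^{small}_i$; since $i$ strictly prefers $k$ and $(i,k)$ is not blocking, machine $k$ cannot improve by admitting $i$ into $\calA^*_k\cup\{i\}$ under the virtual capacity. I would first show every job on $k$ has ratio at least $r_i:=\frac{v_{ik}}{c_{ik}}$: otherwise the high-ratio jobs $H=\{j\in\calA^*_k : \frac{v_{jk}}{c_{jk}}\ge r_i\}$ together with $i$ form a virtual-feasible set — feasibility of $\calA^*_k$ forces $c(H)\le\frac{\lambda-1}{\lambda}C_k$ since the least preferred member of $\calA^*_k$ lies in $\calA^*_k\setminus H$, and $i$ is the least preferred member of $H\cup\{i\}$, so its removal restores feasibility — which $k$ strictly prefers to $\calA^*_k$, because $i$ outranks every discarded low-ratio job; this contradicts non-blockingness. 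Knowing all of $\calA^*_k$ has ratio $\ge r_i$, the set $\calA^*_k\cup\{i\}$ has $i$ as its least preferred element, so if it were virtual-feasible it would be strictly preferred to $\calA^*_k$; non-blockingness therefore forces it to be infeasible, i.e. $c(\calA^*_k)>\frac{\lambda-1}{\lambda}C_k$.

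These two facts finish the estimate. Let $r_{\min}(k)$ be the smallest ratio appearing in $\calA^*_k$; the first fact gives $r_i\le r_{\min}(k)$ for every $i\in S$ with $OPT^{small}_i=k$. Using $v_{ik}=r_i c_{ik}$ and the feasibility $\sum_{i\in OPT^{small}_k} c_{ik}\le C_k$ of the optimum on $k$, I obtain $\sum_{i\in S,\,OPT^{small}_i=k} v_{ik}\le r_{\min}(k)\sum_i c_{ik}\le r_{\min}(k)\,C_k$. On the other hand $v(\calA^*_k)\ge r_{\min}(k)\,c(\calA^*_k)>r_{\min}(k)\cdot\frac{\lambda-1}{\lambda}C_k$, hence $r_{\min}(k)\,C_k<\frac{\lambda}{\lambda-1}\,v(\calA^*_k)$. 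Summing the resulting inequality $\sum_{i\in S,\,OPT^{small}_i=k} v_{ik}<\frac{\lambda}{\lambda-1}\,v(\calA^*_k)$ over all machines yields $\sum_{i\in S} v(OPT^{small}_i)<\frac{\lambda}{\lambda-1}\,v(\calA^*)$, which together with the bound for $i\notin S$ gives the claim.

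I expect the main obstacle to be the stability step of the second paragraph: one must argue directly from the definition of the machine preference list $\calL_k$ and the virtual capacity constraint that the two candidate improving subsets, $H\cup\{i\}$ and $\calA^*_k\cup\{i\}$, are feasible-and-preferred unless the two structural properties hold, taking due care of the tie-breaking convention (ties broken toward smaller capacity) when ratios coincide. The surrounding steps are routine ratio manipulations.
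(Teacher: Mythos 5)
Your proof is correct and takes essentially the same route as the paper's: the same split of jobs into those served at least as well by $\calA^*$ (your complement of $S$, the paper's $X$) and those served strictly better by $OPT^{small}$ (your $S$, the paper's $Y$), the same two stability consequences (every job in $\calA^*_k$ outranks $i$ in ratio, and $c(\calA^*_k)>\frac{\lambda-1}{\lambda}C_k$), and the same charging of the latter group's value against $\frac{\lambda}{\lambda-1}v(\calA^*_k)$ using the capacity feasibility of the optimum. The only cosmetic difference is that you route the estimate through the minimum ratio $r_{\min}(k)$ of $\calA^*_k$ while the paper uses the capacity-weighted average ratio $v(\calA^*_k)/c(\calA^*_k)$; both lie between $r_i$ and the required bound, so the two arguments coincide.
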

\begin{proof}
To simplify the notation, let $\calA^{opt}$ denote the
assignment $OPT^{small}$. Since $\calA^*$ is a stable assignment, for each
pair $(i,k)\in \calA^{opt}\setminus \calA^*$, we have either (i)
$v(\calA^{opt}_i)=v_{ik}\le v\left(\calA^*_i\right)$, or (ii)
$\frac{v_{ik}}{c_{ik}}\le \frac{v_{i'k}}{c_{i'k}}$, for every $i'\in
\calA^*_{k}$, and $c(\calA^*_k)\ge C_k\frac{\lambda-1}{\lambda}$.
The latter implies that
$\frac{v_{ik}}{c_{ik}}\le\frac{v(\calA^*_k)}{c(\calA^*_k)}\le\frac{v(\calA^*_{k})}{C_k}\frac{\lambda}{\lambda-1}$.

In the assignment $\calA^{opt}$, we denote by $X$ the set of jobs
$i$ such that $v(\calA^{opt}_i)\le v(\calA^*_i)$ and by $Y$ the
remaining jobs. Then we have
\begin{eqnarray*}
v(\calA^{opt}) &=& \sum_{i\in X}v(\calA^{opt}_i)+\sum_{k}v(\calA^{opt}_{k}\cap Y) \\
&\le& \sum_{i\in X}v(\calA^*_i)+\frac{\lambda}{\lambda-1}\sum_{k}\frac{v(\calA^*_{k})}{C_k}\sum_{i\in \calA^{opt}_{k}\cap\ Y}c_{ik}\\
&\le& v(\calA^*)+\frac{\lambda}{\lambda-1}\sum_{k}v(\calA^*_{k}) \\
&=& \left(1+\frac{\lambda}{\lambda-1}\right)v(\calA^*).
\end{eqnarray*}
Hence, we get $v(\calA^{opt})\le(2+\frac{1}{\lambda-1})v(\calA^*)$.
\end{proof}

In addition, by stability $\calA^*$ enjoys the following useful properties.

\begin{claim}\label{cl:not_big_improve} For each job $i\in T$ and machine $k$, we have
\begin{enumerate}
\item $v(\calA^{T}_i)\ge v(\calA^*_i).$
\item $v(\calA_{k}^{T})\le\frac{\lambda}{\lambda-1}v(\calA^*_{k}).$
\end{enumerate}
\end{claim}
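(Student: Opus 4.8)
The plan is to derive both inequalities from one comparative–statics principle for the deferred acceptance algorithm \multigs, namely that \emph{introducing additional proposers weakly hurts the other proposers and weakly helps the receivers}. Here $\calA^*$ is exactly the outcome of running \multigs\ on all jobs, while $\calA^T$ is its outcome on the subset $T$; thus $\calA^*$ is obtained from $\calA^T$ by adding the jobs of $R$ as extra proposers. Granting this principle, Part~1 is immediate: each job $i\in T$ faces weakly more competition in the all-jobs run, so $v(\calA^*_i)\le v(\calA^T_i)$. Part~2 will then follow by combining Part~1 with a fractional-knapsack estimate for the value a single machine can collect, together with the fact that a ``full'' machine in $\calA^*$ has load at least $\frac{\lambda-1}{\lambda}C_k$.

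For Part~1 I would establish the comparative-statics statement directly rather than quoting classical lattice theory, since the machine preferences $\calL_k$ are knapsack (subset) preferences and \multigs\ is only guaranteed stable under the virtual-capacity constraint (Claim~\ref{cl:stable_3}). The cleanest route is a rejection-chain / exchange argument: suppose the set $B=\{i\in T: v(\calA^*_i)>v(\calA^T_i)\}$ is nonempty. Each $i\in B$ strictly prefers $k_i:=\calA^*_i$ to $\calA^T_i$, so by stability of $\calA^T$ the pair $(i,k_i)$ does not block, meaning $k_i$ is filled to its virtual capacity in $\calA^T$ with jobs it prefers (by ratio) to $i$. Since $\calA^*$ nonetheless places $i$ on $k_i$, some job of $\calA^T_{k_i}$ must be displaced and is itself weakly worse off, i.e.\ lands back in $B$; tracing these displacements yields an infinite descent (or a cycle), contradicting finiteness. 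Handling the knapsack-style preference $\calL_k$ and the virtual-capacity ``overflow'' of the least-preferred job carefully is the delicate point; I would verify the monotonicity of the displacement map using precisely the greedy-by-ratio filling rule that defines $\calL_k$.

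For Part~2, fix $k$ and use Part~1: every $i\in\calA^T_k$ has $v(\calA^*_i)\le v(\calA^T_i)=v_{ik}$, so $i$ finds $k$ acceptable in $\calA^*$. Let $W=\calA^T_k\cup\calA^*_k$. Stability of $\calA^*$ (via the same characterization used in the preceding $(2+\frac{1}{\lambda-1})$ bound) forces $\calA^*_k$ to be exactly the highest-ratio jobs of $W$ filling up to the virtual capacity, so any job of $W\setminus\calA^*_k$ has ratio at most $\min_{i'\in\calA^*_k}\frac{v_{i'k}}{c_{i'k}}\le\frac{v(\calA^*_k)}{c(\calA^*_k)}$. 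Since $\calA^T_k\subseteq W$ is a feasible set of capacity at most $C_k$, its value is at most that of the fractional knapsack optimum of $W$ at budget $C_k$, which is $v(\calA^*_k)$ plus marginal ratios bounded by $\frac{v(\calA^*_k)}{c(\calA^*_k)}$ over the extra budget $C_k-c(\calA^*_k)$; this gives
\[
v(\calA^T_k)\ \le\ v(\calA^*_k)+\frac{v(\calA^*_k)}{c(\calA^*_k)}\bigl(C_k-c(\calA^*_k)\bigr)\ =\ \frac{C_k}{c(\calA^*_k)}\,v(\calA^*_k).
\]
In the ``full'' case $c(\calA^*_k)\ge\frac{\lambda-1}{\lambda}C_k$, so $\frac{C_k}{c(\calA^*_k)}\le\frac{\lambda}{\lambda-1}$ and the claim follows; in the ``underfilled'' case every $i\in\calA^T_k$ wanting the non-full machine $k$ must already lie in $\calA^*_k$ by stability, whence $\calA^T_k\subseteq\calA^*_k$ and $v(\calA^T_k)\le v(\calA^*_k)$ trivially.

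The main obstacle I anticipate is Part~1: standard comparative-statics and rural-hospital arguments assume responsive many-to-one preferences, whereas here the knapsack constraint and the virtual capacity make \multigs\ a genuinely new model in which even existence of stable outcomes fails without the virtual capacity. I would therefore spend most of the effort verifying that the displacement/rejection-chain argument remains monotone under $\calL_k$, and I would dispose of value ties (where $v(\calA^*_i)=v_{ik}$ but $\calA^*_i\neq k$) using the strict tie-breaking convention fixed for $\calL_i$ and $\calL_k$, so that ``acceptable'' and ``highest-ratio'' comparisons in Part~2 are unambiguous.
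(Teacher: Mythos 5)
Your Part 2 is essentially sound and is a legitimate alternative to the paper's route (the paper compares average ratios using the fact that each machine always holds its most preferred feasible subset of the proposals it has received, while you use stability of $\calA^*$ plus a fractional-knapsack bound; both hinge on the same dichotomy $c(\calA^*_k)\ge\frac{\lambda-1}{\lambda}C_k$ versus $\calA^T_k\subseteq\calA^*_k$). The genuine gap is in Part 1, on which Part 2 explicitly relies. Your rejection-chain argument uses only two facts: stability of $\calA^T$ (to conclude $k_i$ is filled with jobs it prefers to $i$) and stability of $\calA^*$ (to pin down what happens to a displaced job). First, a local slip: a displaced job that is ``weakly worse off'' would lie \emph{outside} $B$, not in it; what one can actually show (using stability of $\calA^*$, by exhibiting the blocking set $\{j\in\calA^*_{k_i}: j\succ_{k_i} i'\}\cup\{i'\}$) is that the displaced job is strictly \emph{better} off, hence in $B$. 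But then the chain simply closes into a cycle, and a cycle among finitely many jobs is not a contradiction --- it is exactly a ``rotation'' between two stable outcomes. Indeed, no argument that invokes only stability of the two assignments can prove Part 1: already in the one-to-one marriage model (jobs $i_1,i_2$, machines $x,y$, preferences $i_1\colon x\succ y$, $i_2\colon y\succ x$, $x\colon i_2\succ i_1$, $y\colon i_1\succ i_2$) both $\{(i_1,y),(i_2,x)\}$ and $\{(i_1,x),(i_2,y)\}$ are stable; if the former plays the role of $\calA^T$ and the latter of $\calA^*$, every job strictly improves, and your chain runs exactly as described ($i_1\to i_2\to i_1$) without producing any contradiction. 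So the inference breaks at the final sentence ``contradicting finiteness.''

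The missing ingredient is precisely what your proof never uses: $\calA^T$ and $\calA^*$ are not arbitrary stable assignments but outputs of the \emph{job-proposing} algorithm \multigs, i.e., they are job-optimal. This is the heart of the paper's proof: using $c_{ik}\le C_k/\lambda$ and the virtual capacity, it first shows that at every moment each machine retains its most preferred feasible subset of all proposals received so far; it then runs the classical ``first job rejected by its best stable partner'' argument to conclude that \multigs\ assigns every job $i$ to $best(i)$, its best partner over all stable assignments. This makes the outcome independent of the order of proposals, so $\calA^*$ may be computed by letting the jobs of $T$ settle first --- reproducing exactly the run that outputs $\calA^T$ --- and only then releasing the jobs of $R$; since in deferred acceptance a proposer's assignment can only deteriorate as the run continues, Part 1 follows, and the proposal-superset property needed for Part 2 falls out of the same order-exchange. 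If you wish to keep your comparative-statics framing (``adding proposers hurts proposers''), you must first establish this job-optimality/order-independence, or adapt the Gale--Sotomayor comparative-statics proof, which likewise requires proposer-optimality of both outcomes rather than mere stability.
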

\begin{proof}
Our proof of the claim exploits similar ideas from the standard
one-to-one stable matching. Note that both assignments $\calA^*$ and
$\calA^{T}$ are generated by the same algorithm, but $\calA^*$ is on
a bigger set of jobs. Similar to the proof of
Claim~\ref{cl:stable_3}, the virtual capacity
$\frac{\lambda-1}{\lambda}C_k$ and the fact that all considered
pairs have capacities $c_{ik}\le \frac{C_k}{\lambda}$ ensure that at
any moment of the algorithm, the set of assigned jobs to any machine
is simply the best possible subset selected from the set of all jobs
that have ever proposed to the machine up to that moment. In
particular, at the end of the algorithm, every machine $k$ will have
its most preferred feasible set (given the virtual capacity
constraint) chosen from the set of all proposals that $k$ has ever
received.

For each job $i$, denote by $best(i)$ the best possible assignment
for $i$ taken over all stable assignments. We will argue that the
\multigs\ will assign every job $i$ to $best(i)$. Assume otherwise,
since every job makes proposals by decreasing order of $\calL_i$, in
the course of the algorithm, there must be a job $i$ that has been
rejected by $best(i)$. We consider the first moment in the algorithm
when a job $i$ is rejected from the machine $k=best(i)$. At that
moment, let $S$ be the set of jobs that are assigned to $k$. By the
above argument, we know that for any $i'\in S$, $k$ prefers $i'$ to
$i$. Next consider a stable assignment $\calA'$, where $i$ is
assigned to $k$ (by the definition of $best(i)$, such assignment
exists). Then there exists $i'\in S$ that is not assigned to $k$ in
$\calA'$; let $\calA_{i'}'=k'(\neq k)$. By the definition of
$best(i')$, we know that $i'$ weakly prefers $best(i')$ to $k'$.
Since $\calA'$ is a stable assignment, $i'$ and $k$ do not form a
blocking pair, which implies that $i'$ prefers $k'$ to $k$. Now we
recall that $i$ is the first job rejected from the machine $best(i)$
in \multigs; we know that at that moment, $i'$ has not been rejected
by $best(i')$. Hence, $i'$ weakly prefers $k$ to $best(i')$.
Therefore, on the preference $\calL_{i'}$, we have the following
preference order: $k\succeq best(i')\succeq k'\succ k$, which gives
a contradiction. Therefore, each job $i$ is assigned to $best(i)$ in
the output of \multigs.

Note that the above argument does not rely on any specific order of
proposals (especially, the one defined in Step~1 of \multigs). That
is, an arbitrary order of proposals, as long as every job proposes
to the most preferred machine that he has not proposed yet, will
lead to the same assignment by matching every job $i$ to $best(i)$.
Now given that the order in which jobs propose does not matter for
the outcome of the algorithm, in the computation of $\calA^*$, we
may consider the order where jobs in $T$ are settled first and then
add the remaining jobs. Thus, for each job $i\in T$ the assignment
$\calA^*_i$ can only be worse than in $\calA^T_i$, i.e.,
$v(\calA^{T}_i)\ge v(\calA^*_i).$

In addition, we know that the set of jobs proposed to each machine
$k$ in $\calA^{T}$ is a subset in $\calA^*$. Thus, by the above argument,
the average ratio $\frac{v_{ik}}{c_{ik}}$ of the assigned jobs in $\calA^*_{k}$ is larger than or equal to that in $\calA^{T}_{k}$.
Notice that in the assignment $\calA^T_k$, the total capacity used is at most $C_k$.
For the assignment $\calA^*_k$, either we use more than $(1-\frac{1}{\lambda})C_k$ capacity or
otherwise (which implies that $\calA^{T}_k\subseteq \calA^*_{k}$). For either case, we have
$v(\calA_{k}^{T})\le\frac{\lambda}{\lambda-1}v(\calA^*_{k})$, which completes the proof.
\end{proof}

For the assignment $\calA^*$, let us consider a restricted
assignment $\widetilde{\calA}^{*}$, where we only keep the first
$\lambda$ largest value jobs of $\calA^*$ for each machine $k$. Note
that $\widetilde{\calA}^{*}$ is a feasible assignment for the setting in
\separationm\ as well. We have the following claim.

\begin{claim}\label{claim-a*-a2}
$v\big(\calA^{(2)}\big)\ge\frac{1}{2}v\big(\widetilde{\calA}^{*}\big).$
\end{claim}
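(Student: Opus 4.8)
The plan is to view the feasibility constraint of \separationm($\lambda$), after the removal of large pairs, as the intersection of two partition matroids and to invoke the same two‑matroid greedy guarantee already used for \matroid-\unit. On the ground set $E$ of all remaining (small) pairs $(i,k)$ with $c_{ik}\le\frac{1}{\lambda}C_k$, let $M_1$ be the partition matroid allowing at most one pair per job, and let $M_2$ be the partition matroid allowing at most $\lambda$ pairs per machine. A set of pairs is output‑feasible for \separationm\ precisely when it is independent in both $M_1$ and $M_2$: the ``at most one machine per job'' requirement is $M_1$, the ``at most $\lambda$ jobs per machine'' restriction is $M_2$, and the real capacity constraint $C_k$ is automatically met since any $\lambda$ small pairs fit ($\lambda\cdot\frac{1}{\lambda}C_k=C_k$). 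Moreover, on these small pairs \smgreedy\ simply scans the pairs in decreasing order of $v_{ik}$ and adds a pair whenever it keeps both independences, i.e., it is exactly the value‑greedy algorithm for the common independent sets of $M_1$ and $M_2$; hence $\calA^{(2)}$ is the greedy common independent set.

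Next I would check that $\widetilde{\calA}^{*}$ is itself a common independent set of $M_1$ and $M_2$. By construction $\calA^{*}$ assigns each job to at most one machine, and this property is inherited by $\widetilde{\calA}^{*}$, so $\widetilde{\calA}^{*}\in\mathcal{I}_1$; and $\widetilde{\calA}^{*}$ retains at most the $\lambda$ largest‑value jobs on every machine, so $\widetilde{\calA}^{*}\in\mathcal{I}_2$. Since $\calA^{*}$ uses only small pairs, $\widetilde{\calA}^{*}$ lives on the same ground set $E$ as $\calA^{(2)}$, so it is a feasible candidate for the \separationm\ instance.

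Finally I would apply the well‑known fact (used earlier in this paper for two‑matroid intersection) that the value‑greedy algorithm is a $2$‑approximation to the maximum‑weight common independent set of two matroids. Writing $I^{*}$ for a maximum‑value common independent set of $M_1\cap M_2$, this gives
\[
v\big(\calA^{(2)}\big)\ \ge\ \tfrac12\, v(I^{*})\ \ge\ \tfrac12\, v\big(\widetilde{\calA}^{*}\big),
\]
where the second inequality holds because $\widetilde{\calA}^{*}$ is one common independent set while $I^{*}$ is the best one. This is exactly the asserted bound.

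The main point to verify carefully is the identification in the first paragraph: that \smgreedy\ restricted to the small pairs coincides with value‑greedy for $M_1\cap M_2$ (in particular that the deferred‑acceptance swaps never change the outcome under the value‑aligned preferences, and that the cardinality bound $\lambda$ is the only active machine constraint), and that the two partition matroids capture \separationm‑feasibility exactly. Once this reduction is in place the result is immediate from the cited greedy guarantee; alternatively one can argue directly by charging each pair of $\widetilde{\calA}^{*}$ that greedy skips to the pair of $\calA^{(2)}$ that blocked it (its job‑mate, or a machine slot, each of no smaller value), which is the concrete instantiation of that guarantee.
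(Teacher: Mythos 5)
Your proposal is correct and is essentially the paper's own argument in a different formal dress: the paper splits each machine into $\lambda$ unit slots, views $\widetilde{\calA}^{*}$ as a feasible matching and $\calA^{(2)}$ as a greedy matching between jobs and slots, and invokes the $2$-approximation of greedy weighted matching, which is precisely the greedy bound for the intersection of your two partition matroids $M_1$ and $M_2$. The verification points you flag do hold exactly as you expect --- the real capacities $C_k$ are slack because $\lambda$ pairs of capacity at most $C_k/\lambda$ always fit, and \smgreedy\ never revokes an assignment since proposals arrive in decreasing value order --- so the reduction is sound in both formulations.
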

\begin{proof}
Observe that both $\calA^{(2)}$ and $\widetilde{\calA}^{*}$ run on the same set pairs, i.e., $c_{ik}\le \frac{C_k}{\lambda}$.
We may split each machine into $\lambda$ equal slots, where each slot can take only one job. Thus, $\widetilde{\calA}^{*}$
may be viewed as a matching and $\calA^{(2)}$ as a maximal matching for the new setting between jobs and slots.
Since a maximal matching is a 2 approximation to the maximum matching, which is an upper bound on $v\big(\widetilde{\calA}^{*}\big)$,
we have $2\cdot v\big(\calA^{(2)}\big)\ge v\big(\widetilde{\calA}^{*}\big)$.
\end{proof}

Before continuing the proof, we first establish the following probabilistic fact.

\begin{lemma}\label{lem:probability}
Let $\delta_1\le\frac{1}{36}$ be a positive real number. Let $a_1\ge
a_2\ge \cdots\ge a_\ell\ge 0$ be real numbers with the sum
$a=a_1+a_2+\ldots+a_\ell$. Assume $a_1 < \delta_1 a$. Consider the
following probabilistic event: One selects each number
$a_1,\ldots,a_\ell$ independently at random with probability $1/2$
each; let $b$ be the random variable denoting the summation of the
selected numbers. Then
\[
\pr\left(\frac{1}{3}a<b<\frac{2}{3}a\right)\ge \frac{3}{4}.
\]
\end{lemma}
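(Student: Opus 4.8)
The plan is to prove concentration of $b$ around its mean $a/2$ via Chebyshev's inequality, exploiting the hypothesis that no single term $a_i$ is large in order to control the variance. I may assume $a>0$ (if $a=0$ all terms vanish and the statement is trivial), so that dividing by $a$ is legitimate.

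First I would introduce independent indicator variables $X_1,\dots,X_\ell$ with $\pr(X_i=1)=\pr(X_i=0)=\tfrac12$, so that $b=\sum_{i}a_iX_i$. By linearity, $\ex[b]=\tfrac12\sum_i a_i=\tfrac{a}{2}$, and by independence $\mathrm{Var}(b)=\tfrac14\sum_i a_i^2$. The key estimate is $\sum_i a_i^2\le a_1\sum_i a_i=a_1 a$, which uses only that the $a_i$ are nonnegative and each bounded by the largest one $a_1$; combined with the hypothesis $a_1<\delta_1 a$ this yields $\mathrm{Var}(b)<\tfrac{\delta_1 a^2}{4}$.

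Next I would apply Chebyshev's inequality with deviation $t=a/6$, noting that the target event $\tfrac13 a<b<\tfrac23 a$ is exactly $|b-\tfrac{a}{2}|<\tfrac{a}{6}$ (with matching strict inequalities). Chebyshev then gives $\pr(|b-\tfrac a2|\ge\tfrac a6)\le \mathrm{Var}(b)/(a/6)^2 = 36\,\mathrm{Var}(b)/a^2 < 9\delta_1$. Since $\delta_1\le\tfrac{1}{36}$, the right-hand side is at most $\tfrac14$, so the complementary event has probability at least $\tfrac34$, which is precisely the claim.

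There is no serious obstacle here; it is a clean second-moment argument, and the main thing worth emphasizing is that the smallness assumption $a_1<\delta_1 a$ is genuinely necessary and is exactly what the variance bound consumes: if a single term carried a constant fraction of the total mass, $b$ would be essentially bimodal and no such concentration could hold. The chosen constants dovetail precisely, since the half-width $a/6$ produces the factor $36$ in Chebyshev, which cancels against the threshold $\delta_1\le\tfrac{1}{36}$ to leave a tail probability of at most $\tfrac14$.
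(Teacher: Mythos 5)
Your proof is correct and follows essentially the same route as the paper's: both set up the indicator/two-point random variables, compute $\ex[b]=a/2$ and $\mathrm{Var}(b)=\tfrac14\sum_j a_j^2$, bound $\sum_j a_j^2$ by $\delta_1 a^2$ using $a_j\le a_1<\delta_1 a$, and conclude via Chebyshev's inequality at deviation $a/6$ (the paper phrases this as showing $2\sigma\le a/6$, which is the same computation). The only cosmetic remark is that when $a=0$ the lemma is vacuous rather than trivial, since the hypothesis $a_1<\delta_1 a$ cannot hold with $a_1\ge 0$; this does not affect the argument.
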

\begin{proof}
Let us consider for each $1\le j\le\ell$, the random variables $X_j$
with $\pr(X_j=0)=\pr(X_j=a_j)=0.5$. Let
$X=\sum\limits_{i=j}^{\ell}X_j$; we have $b=X$. Then the expectation
$\ex(X)=\frac{a}{2}$ and variance
\[
\sigma^2=\mathbf{Var}(X)=\sum\limits_{j=1}^{\ell}\mathbf{Var}(X_j)=\frac{1}{4}\sum\limits_{j=1}^{\ell}a_j^2.
\]
Applying Chebyshev's inequality, we get
\[
\pr\left(\left|X-\frac{a}{2}\right|\ge 2\sigma\right)\le\frac{1}{4}.
\]

In order to conclude the proof of the lemma, it remains to show that
$2\sigma\le \frac{a}{6}$, which is equivalent to show that
\[
36\cdot\left(a_1^2+a_2^2+\ldots+a_\ell^2\right)\le\left(a_1+a_2+\ldots+a_\ell\right)^2.
\]
Since $a\ge \frac{a_1}{\delta_1}\ge\frac{a_j}{\delta_1}$ for every $1\le j\le\ell$, we have
\[
\left(a_1+a_2+\ldots+a_\ell\right)^2=\sum_{j=1}^{\ell}a_j\cdot a\ge\sum_{j=1}^{\ell}\frac{a_j^2}{\delta_1} \ge 36\sum_{j=1}^{\ell}a_j^2 .
\]
Therefore, the lemma follows.
\end{proof}

Consider the partition of jobs into the two sets $T$ and $R$ in the
\Gapm($\lambda,\mu$). For a fixed machine $k$, we consider the jobs
$\calA_{k}^*\cap T$ and $\calA_{k}^*\cap R$. By the definition of
$T$ and $R$, every job in $\calA_{k}^*$ will be placed in $T$ with
probability $1/2$. Recall that $\widetilde{\calA}^{*}_{k}$ keeps the
top $\lambda$ value jobs in $\calA^*_k$. Depending on the relation
between $v(\widetilde{\calA}^{*}_{k})$ and $v(\calA^*_{k})$, at
least one of the following alternatives has to be true (where the
latter follows from the above Lemma~\ref{lem:probability}):
\begin{enumerate}
\item $v(\widetilde{\calA}^{*}_{k})\ge \delta_1 v(\calA^*_{k}),$
\item $\pr_{_T}\Big(\frac{1}{3}v(\calA^*_{k})\le v(\calA^*_{k}\cap T) \le\frac{2}{3}v(\calA^*_{k})\Big)\ge\frac{3}{4}.$
\end{enumerate}


Intuitively, if there are many similar jobs in $\calA_{k}^*$, then
with a high probability the total value of jobs in $\calA_{k}^*\cap
T$ and in $\calA_{k}^*\cap R$ will be close to each other. On the
other hand, if the former does not hold, the \separationm\ gives us
a good value for machine $k$. Let us denote the set of the
machines satisfying the condition $v(\widetilde{\calA}^{*}_{k})\ge
\delta_1 v(\calA^*_{k})$ by $G$ and the remaining set of machines by
$H$. Thus, by Claim~\ref{claim-a*-a2}, \separationm\ guarantees that
we get a constant fraction of the total value of $\calA^*$ taken
over all machines in $G$. For every machine $k\in H$, Lemma
\ref{lem:probability} ensures that
\[
\frac{1}{3}v(\calA^*_{k})\le v(\calA^*_{k}\cap
T)\le\frac{2}{3}v(\calA^*_{k})
\]
occurs with a probability of at least $3/4$. Consider the collection
$\calD^*_k$ of realizations of $T$ for which the above two
inequalities hold; by
abusing the notation, let $\calD^*_k$ denote the distribution of $T$
restricted to the collection as well.

Now let us estimate the value we get from \Gapm($\lambda,\mu$) over
$H$. For each machine $k\in H$, there are the following two cases.
\begin{enumerate}
\item In \Gapm($\lambda,\mu$), we have rejected a job at machine $k$ due to the capacity
constraint, that is, we have rejected a pair $(i,k)$ because of
$c_{ik}+c(\calA_k)> C_k$. Since \Gapm($\lambda,\mu$) only considers
pairs with capacities less than or equal to $\frac{C_k}{\lambda}$,
in that case almost all the capacity of $k$ is used. That is,
\[
\sum_{i\in\calA^{(3)}_{k}}c_{ik}\ge C_k\left(1-\frac{1}{\lambda}\right).
\]
Further, by the rule of the mechanism, for every $i\in
\calA_{k}^{(3)}$, we have $\frac{v_{ik}}{c_{ik}}\ge t_k$, i.e.,
$v_{ik}\ge c_{ik}\frac{\mu\cdot v(\calA_{k}^{T})}{C_k}$. Let
$\delta_2=\mu(1-\frac{1}{\lambda})>0$; therefore, we have
\begin{equation}
\label{eq:case1}
v(\calA_{k}^{(3)})\ge \sum_{i\in\calA^{(3)}_{k}}c_{ik} \frac{\mu\cdot v(\calA_{k}^{T})}{C_k} \ge \mu\left(1-\frac{1}{\lambda}\right)v(\calA_{k}^{T})=\delta_2\cdot v(\calA_{k}^{T}).
\end{equation}

\item No job who has passed the threshold $t_k$ has been rejected from
machine $k$ in \Gapm($\lambda,\mu$). Denote by $R^{+}$ the set of
jobs in $R$ who have passed the threshold for the corresponding
machine in assignment $\calA^*$; denote by $R^{-}$ the remaining
jobs in $R$. Therefore, all jobs in $\calA^*_{k}\cap R^{+}$ get at
least as good an assignment as in $\calA^*$. Then for any
$i\in\calA^*_{k}\cap R^+$ we have $v(\calA^{(3)}_i)\ge
v(\calA^*_i).$

Further, for each job $i\in \calA_{k}^*\cap R^-$, we have
\[
v_{ik}<c_{ik}t_k=\mu\frac{v(\calA^{T}_{k})c_{ik}}{C_k}.
\]
Therefore,
\[v(\calA^*_{k}\cap R^-)\le \mu\cdot v(\calA^{T}_{k}) \frac{\sum_{i\in\calA^*_{k}\cap R^-}c_{ik}}{C_k}\le \mu\cdot v(\calA^{T}_{k})\le \mu\frac{\lambda}{\lambda-1}v(\calA^*_{k})\]
where the last inequality follows from
Claim~\ref{cl:not_big_improve}. Thus,
\[\sum\limits_{i\in\calA^*_{k}\cap R}v(\calA^{(3)}_i)\ge \sum\limits_{i\in\calA^*_{k}\cap R^+}v(\calA^{(3)}_i)\ge v(\calA^*_{k}\cap R) - \mu\frac{\lambda}{\lambda-1}v(\calA^*_{k}).\]
Let $\delta_3=\frac{1}{3}- \mu\frac{\lambda}{\lambda-1}$. Taking appropriate values for $\lambda$ and $\mu$, we can ensure that $\delta_3>0$. For any $T\in \calD^*_k$, we have $v(\calA^*_{k}\cap R)\ge\frac{1}{3}v(\calA^*_{k})$; then we get
\begin{equation}
\label{eq:case2}
\sum\limits_{i\in\calA^*_{k}\cap R}v(\calA^{(3)}_i)\ge \left(\frac{1}{3}- \mu\frac{\lambda}{\lambda-1}\right)v(\calA^*_{k}) = \delta_3\cdot v(\calA^*_{k}).
\end{equation}
\end{enumerate}

Next we estimate the expectation of $v(\calA^{(3)})$. Let $\calD$
denote the distribution of the mechanism \Gapm($\lambda,\mu$)\ to
generate $T$.
\begin{eqnarray*}
\ex_{T\sim \calD}\bigg(2\cdot v\left(\calA^{(3)}\right)\bigg) &=& \ex_{T\sim \calD}\Bigg(\sum_{k}v\left(\calA^{(3)}_{k}\right)+\sum_{k}\sum_{i\in\calA_{k}^*\cap R}v\left(\calA^{(3)}_i\right)\Bigg) \\
&\ge& \sum_{k\in H}\ex_{T\sim \calD}\Bigg(v\left(\calA^{(3)}_{k}\right)+\sum_{i\in\calA_{k}^*\cap R}v\left(\calA^{(3)}_i\right)\Bigg) \\
&\ge& \sum_{k\in H}\pr(T\in \calD^*_k)\cdot \ex_{T\sim \calD^*_k}\Bigg(v\left(\calA^{(3)}_{k}\right)+\sum_{i\in\calA_{k}^*\cap R}v\left(\calA^{(3)}_i\right)\Bigg).\\
\end{eqnarray*}
The last inequality follows from the fact that for any random
variable $\xi(t)$, one has
\[\ex_{x}\Big(\xi(x)\Big)=\pr(x\mid Z)\cdot \ex_{x:\,x\in Z}\Big(\xi(x)\Big)+\pr\left(x\mid \overline{Z}\right)\cdot \ex_{x:\,x\in \overline{Z}}\Big(\xi(x)\Big).\]
We continue the argument; by applying either \eqref{eq:case1} or
\eqref{eq:case2}, we have
\begin{eqnarray*}
\ex_{T\sim \calD}\left(2\cdot v\left(\calA^{(3)}\right)\right) &\ge& \sum_{k\in H}\pr(T\in \calD^*_k)\cdot \ex_{T\sim \calD^*_k}\min\Big(\delta_2\cdot v\left(\calA^{T}_{k}\right), \delta_3\cdot v\left(\calA_{k}^*\right)\Big) \\
&\ge& \min\Big(\delta_2,\frac{(\lambda-1)\delta_3}{\lambda}\Big)\cdot\sum_{k\in H}\pr(T\in \calD^*_k)\cdot \ex_{T\sim \calD^*_k} \Big(v\left(\calA^{T}_{k}\right)\Big)
\end{eqnarray*}
where the last inequality follows from
Claim~\ref{cl:not_big_improve}. Let
$\delta_4=\min\left(\delta_2,\frac{(\lambda-1)\delta_3}{\lambda}\right)>0$.

Recall that
\[
\ex_{T\sim \calD}\Big(v(\calA^{T}_{k})\Big)= \pr(T\in \calD^*_k)\cdot\ex_{T\sim \calD^*_k}\Big(v(\calA^{T}_{k})\Big)+ \pr(T\notin \calD^*_k)\cdot\ex_{T\sim \calD\setminus\calD^*_k}\Big(v(\calA^{T}_{k})\Big).
\]
Since $\pr(T\notin \calD^*_k)\le \frac{1}{4}$, by Claim~\ref{cl:not_big_improve}, which says that $v(\calA_{k}^{T})\le\frac{\lambda}{\lambda-1}v(\calA^*_{k})$, for each $k\in H$ we get
\begin{equation}
\label{eq:cond_expect}
\pr(T\in \calD^*_k)\cdot\ex_{T\sim \calD^*_k}\Big(v(\calA^{T}_{k})\Big)\ge\ex_{T\sim
\calD}\Big(v(\calA^{T}_{k})\Big)-\frac{1}{4}\frac{\lambda}{\lambda-1}v(\calA_{k}^*).
\end{equation}

We continue our lower bound on $\ex_{T\sim \calD}\Big(2\cdot
v\left(\calA^{(3)}\right)\Big)$; by applying \eqref{eq:cond_expect},
we have
\begin{eqnarray*}
\ex_{T\sim \calD}\Big(2\cdot v\left(\calA^{(3)}\right)\Big) &\ge& \delta_4\cdot\sum_{k\in H}\left(\ex_{T\sim \calD}\Big(v\left(\calA^{T}_{k}\right)\Big)-\frac{\lambda}{4(\lambda-1)}v\left(\calA_{k}^*\right)\right) \\
&\ge& \delta_4\cdot\sum_{k\in H}\ex_{T\sim \calD}\Big(v\left(\calA^{T}_{k}\right)\Big)-\delta_4\cdot\frac{\lambda}{4(\lambda-1)}v\left(\calA^*\right) \\
&\ge& \delta_4\cdot\sum_{k}\ex_{T\sim \calD}\Big(v\left(\calA^{T}_{k}\right)\Big)-\delta_4\cdot\frac{\lambda}{\lambda-1}\sum_{k\in G}v\left(\calA^*_{k}\right)-\delta_4\cdot\frac{\lambda}{4(\lambda-1)}v\left(\calA^*\right).\\
\end{eqnarray*}

By using the first property of Claim \ref{cl:not_big_improve}, we
have
\begin{eqnarray*}
\sum_{k}\ex_{T\sim \calD}\Big(v(\calA^{T}_{k})\Big) &=& \ex_{T\sim \calD}\Big(v\left(\calA^T\right)\Big)=\ex_{T\sim \calD}\sum_{i\in T}v\left(\calA^T_i\right) \\
&\ge& \ex_{T\sim \calD}\sum_{i\in T}v\left(\calA^*_i\right)
=\frac{1}{2}\cdot v\left(\calA^*\right).\\
\end{eqnarray*}

Then we have the following lower bound:
\begin{equation}
\label{eq:expect_A_3}
\ex_{T\sim \calD}\Big(v\big(\calA^{(3)}\big)\Big)\ge \delta_4\cdot\left(\frac{1}{4}-\frac{\lambda}{8(\lambda-1)}\right)v\left(\calA^*\right)-\frac{\delta_4\lambda}{2(\lambda-1)}\cdot \sum_{k\in G}v\left(\calA^*_{k}\right).
\end{equation}
Let
$\delta_5=\delta_4\cdot\left(\frac{1}{4}-\frac{\lambda}{8(\lambda-1)}\right)$.
Taking an appropriate value for $\lambda$, we can ensure that
$\delta_4,\delta_5>0$.

Recall the definition of group $G$ and what we already got for
$v\left(\calA^{(2)}\right)$ by Claim~\ref{claim-a*-a2}:
\[
v\left(\calA^{(2)}\right)\ge\frac{1}{2}\cdot v\left(\widetilde{\calA}^{*}\right)\ge\frac{\delta_1}{2}\cdot \sum_{k\in G}v\left(\calA^*_{k}\right).
\]
Taking appropriate values for $\delta_1$ and $\delta_4$ (such that
$\frac{\delta_1}{2}\ge\frac{\delta_4\lambda}{2(\lambda-1)}$), we
have
\[
\ex_{T\sim \calD}\Big(v\big(\calA^{(3)}\big)\Big)+v\big(\calA^{(2)}\big)\ge \delta_5\cdot v\left(\calA^*\right).
\]

Finally, the expected value of the \GapMainm($\lambda,\mu$)
satisfies
\begin{eqnarray*}
& & \frac{1}{3}\cdot \left(v\big(\calA^{(1)}\big)+v\big(\calA^{(2)}\big)+\ex_{T\sim \calD}\Big(v\big(\calA^{(3)}\big)\Big)\right) \\
&\ge& \frac{1}{3}\cdot \left(\frac{1}{2\lambda}v\left(OPT^{large}\right)+\frac{\delta_5}{2+\frac{1}{\lambda-1}}\cdot v\left(OPT^{small}\right)\right) \\
&\ge& \frac{1}{3}\cdot \min\left(\frac{1}{2\lambda},\frac{\delta_5}{2+\frac{1}{\lambda-1}}\right)\cdot v\left(OPT\right).
\end{eqnarray*}
By choosing, e.g., $\lambda=3$, $\mu=\frac{1}{6}$ and
$\delta_1=\frac{1}{36}$, it can be seen that all conditions for
$\delta_2,\delta_3,\delta_4,\delta_5$ are satisfied. Hence,
\GapMainm\ gives a constant approximation. This completes the proof
of Theorem~\ref{thm-gap-main-mechanism}.


\begin{thebibliography}{99}
\small
\bibitem{AFP10}N. Alon, M. Feldman, A. Procaccia, M. Tennenholtz, \textit{Strategyproof Approximation of the Minimax on Networks}, Mathematics of Operations Research, V.35(3), 513-526, 2010.
\bibitem{AFP11}N. Alon, F. Fischer, A. Procaccia, M. Tennenholtz, \textit{Sum of Us: Strategyproof Selection from the Selectors}, 2011.
\bibitem{AAS05}N. Andelman. Y. Azar, M. Sorani, \textit{Truthful Approximation Mechanisms for Scheduling Selfish Related Machines}, STACS 2005, 69-82.
\bibitem{AT01}A. Archer, \'E. Tardos, \textit{Truthful Mechanisms for One-Parameter Agents}, FOCS 2001, 482-491.
\bibitem{AFK10}I. Ashlagi, F. Fischer, I. Kash, A. Procaccia, \textit{Mix and Match}, EC 2010, 305-314.
\bibitem{BDG}M. Babaioff, M. Dinitz, A. Gupta, N. Immorlica, K. Talwar: \textit{Secretary Problems: Weights and Discounts}, SODA 2009, 1245-1254.
\bibitem{BIK}M. Babaioff, N. Immorlica, R. Kleinberg, \textit{Matroids, Secretary Problems, and Online Mechanisms}, SODA 2007, 434-443.
\bibitem{CEG10}N. Chen, E. Elkind, N. Gravin, F. Petrov, \textit{Frugal Mechanism Design via Spectral Techniques}, FOCS 2010, 755-764.
\bibitem{CGL11}N. Chen, N. Gravin, P. Lu, \textit{On the Approximability of Budget Feasible Mechanisms}, SODA 2011, 685-699.
\bibitem{clarke}E. H. Clarke, \textit{Multipart Pricing of Public Goods}, Public Choice, V.11, 17-33, 1971.
\bibitem{DFP08}O. Dekel, F. Fischer, A. Procaccia, \textit{Incentive Compatible Regression Learning}, SODA 2008, 277-286.
\bibitem{DDD08}P. Dhangwatnotai, S. Dobzinski, S. Dughmi, T. Roughgarden, \textit{Truthful Approximation Schemes for Single-Parameter Agents}, FOCS 2008.
\bibitem{DD09}S. Dobzinski, S. Dughmi, \textit{On the Power of Randomization in Algorithmic Mechanism Design}, FOCS 2009, 505-514.
\bibitem{DNS06}S. Dobzinski, N. Nisan, M. Schapira, \textit{Truthful Randomized Mechanisms for Combinatorial Auctions}, STOC 2006, 644-652.
\bibitem{DG10}S. Dughmi, A. Ghosh, \textit{Truthful Assignment without Money}, EC 2010, 325-334.
\bibitem{DRY11}S. Dughmi, T. Roughgarden, Q. Yan, \textit{From Convex Optimization to Randomized Mechanisms: Toward Optimal Combinatorial Auctions for Submodular Bidders}, STOC 2011.
\bibitem{FGM06}L. Fleischer, M. Goemans, V. Mirrokni, M. Sviridenko, \textit{Tight Approximation Algorithms for Maximum General Assignment Problems}, SODA 2006, 611-620.
\bibitem{GS}D. Gale, L. S. Shapley, \textit{College Admissions and the Stability of Marriage}, American Mathematical Monthly, V.69, 9-15, 1962.
\bibitem{groves}T. Groves, \textit{Incentives in Teams}, Econometrica, V.41, 617-631, 1973.
\bibitem{KKT05}A. R. Karlin, D. Kempe, T. Tamir, \textit{Beyond VCG: Frugality of Truthful Mechanisms}, FOCS 2005, 615-626.
\bibitem{KSM10}D. Kempe, M. Salek, C. Moore, \textit{Frugal and Truthful Auctions for Vertex Covers, Flows, and Cuts}, FOCS 2010, 745-754.
\bibitem{LSW10}P. Lu, X. Sun, Y. Wang, Z. Zhu, \textit{Asymptotically Optimal Strategy-Proof Mechanisms for Two-Facility Games}, EC 2010, 315-324.
\bibitem{NR99}N. Nisan, A. Ronen, \textit{Algorithmic Mechanism Design}, STOC 1999, 129-140.
\bibitem{agt-book}N. Nisan, T. Roughgarden, E. Tardos, V. Vazirani, \textit{Algorithmic Game Theory}, Cambridge University Press, 2007.
\bibitem{PT09}A. Procaccia, M. Tennenholtz, \textit{Approximate Mechanism Design without Money}, EC 2009, 177-186.
\bibitem{roth-book}A. E. Roth, M. Sotomayor, \textit{Two-Sided Matching: A Study in Game-Theoretic Modeling and Analysis}, Cambridge University Press, 1992.
\bibitem{ST93}D. Shmoys, E. Tardos, \textit{An Approximation Algorithm for the Generalized Assignment Problem}, Mathematical Programming, V.62(3), 461-474, 1993.
\bibitem{PS10}Y. Singer, \textit{Budget Feasible Mechanisms}, FOCS 2010, 765-774.
\bibitem{talwar}K. Talwar, \textit{The Price of Truth: Frugality in Truthful Mechanisms}, STACS 2003, 608-619.
\bibitem{vickrey}W. Vickrey, \textit{Counterspeculation, Auctions and Competitive Sealed Tenders}, Journal of Finance, V.16, 8-37, 1961.
\end{thebibliography}
\end{document}